\documentclass[journal,10pt,twoside]{IEEEtran}

\usepackage[T1]{fontenc}
\usepackage[utf8]{inputenc}
\usepackage{amsmath,amssymb,amsfonts,amsthm}
\usepackage{algorithmic}
\usepackage{algorithm}
\usepackage{graphicx}
\usepackage{xcolor}
\usepackage{booktabs}
\usepackage{multirow}
\usepackage{cite}
\usepackage{url}
\usepackage{tikz}
\usepackage{pgfplots}
\pgfplotsset{compat=1.18}
\usetikzlibrary{arrows.meta, shapes.geometric, positioning, fit, backgrounds,
                decorations.pathreplacing, calc}
\usepackage{balance}
\usepackage{microtype}
\usepackage{xspace}

\definecolor{primary}{RGB}{0,82,155}
\definecolor{accent}{RGB}{220,50,47}
\definecolor{semantic}{RGB}{42,157,143}
\definecolor{knowledge}{RGB}{231,111,81}
\definecolor{lightgray}{RGB}{240,240,240}

\newtheorem{proposition}{Proposition}
\newtheorem{corollary}{Corollary}[proposition]
\newtheorem{remark}{Remark}
\newtheorem{assumption}{Assumption}

\newcommand{\ie}{\textit{i.e.},\xspace}
\newcommand{\eg}{\textit{e.g.},\xspace}
\newcommand{\etal}{\textit{et al.}\xspace}

\newcommand{\calK}{\mathcal{K}}
\newcommand{\calR}{\mathcal{R}}

\newcommand{\bz}{\mathbf{z}}
\newcommand{\btheta}{\boldsymbol{\theta}}
\newcommand{\bphi}{\boldsymbol{\phi}}

\begin{document}

\title{SA-DTS: Semantic-Aware Digital Twin Synchronization over 6G Networks}

\author{%
  \IEEEauthorblockN{Vincenzo Sammartino}\\
  \IEEEauthorblockA{%
    Dipartimento di Informatica,
    Universit\`a di Pisa, Pisa, 56127, Italy
    and Visiting Researcher,
    King Abdullah University of Science and Technology (KAUST),
    Thuwal, 23955, Saudi Arabia\\
    Email: vincenzo.sammartino@phd.unipi.it}%

\thanks{Digital Object Identifier: 10.1109/JSAC.XXXX.XXXXXXX}}

\markboth{IEEE Journal on Selected Areas in Communications,~Vol.~XX, No.~X, Month 2026}%
{Sammartino V.: SA-DTS: Semantic-Aware Digital Twin Synchronization over 6G}

\maketitle

\begin{abstract}
Digital Twins (DTs) are emerging as a cornerstone of the 6G vision, enabling real-time
cyber-physical mirroring for smart manufacturing, autonomous vehicles, and remote
healthcare. However, maintaining high-fidelity synchronization at scale demands an
enormous and sustained uplink bandwidth, threatening both the feasibility and the
energy efficiency of large deployments. We propose a \emph{Semantic-Aware DT
Synchronization} (SA-DTS) framework that radically redefines the synchronization
pipeline: instead of streaming raw sensor or video data, a lightweight neural semantic
encoder at the physical-world source extracts only \emph{task-relevant features} and
transmits compact semantic descriptors over the 6G air interface. At the DT replica,
a paired decoder coupled with a dynamic Knowledge Graph (KG) reconstructs the full
contextual state. A hierarchical KG partitioning strategy with an adaptive partition
count $G = \lceil N / \log_2 N \rceil$ ensures that aggregate update overhead scales
as $\mathcal{O}(N \log N)$ rather than $\mathcal{O}(N^2)$, making the framework
viable for deployments with hundreds of simultaneously twinned entities. Extensive
simulations on three canonical DT workloads---industrial robot control,
patient-monitoring, and vehicular platooning---demonstrate bandwidth savings of up to
94\%, end-to-end synchronization latency reductions of 87\%, and KG-assisted
state-reconstruction accuracy exceeding 97\%, all under realistic 6G channel
conditions. Empirical correlation confirms that the proposed Semantic Fidelity Score
tracks standard task metrics (collision accuracy, alarm $F_1$, spacing deviation) with
Pearson $r > 0.97$ (95\% CI: $[0.961, 0.982]$). Our results reveal that semantic
communication is not merely a compression tool but a fundamental enabler for truly
real-time, scalable DT ecosystems.
\end{abstract}

\begin{IEEEkeywords}
Digital Twins, Semantic Communication, 6G, Knowledge Graph, Deep Learning,
Cyber-Physical Systems, Network Synchronization, Edge Intelligence, Scalability.
\end{IEEEkeywords}

\section{Introduction}
\label{sec:intro}

\textbf{Motivation.}
The 6G paradigm promises not merely faster connectivity but a deep integration of the
physical and digital worlds through massive, always-on \emph{Digital Twin} (DT)
infrastructures~\cite{fuller2020digital,lu2021communication,saad2020vision}. A DT is
a continuously updated virtual replica of a physical entity---ranging from a single
industrial robot arm to an entire smart city---that enables real-time monitoring,
predictive analytics, and closed-loop autonomous control~\cite{tao2019digital,grieves2017digital}.
Realizing this vision at scale exposes a fundamental tension: the synchronization
fidelity required by DT applications (latency $\leq 1$\,ms, update rate $\geq 100$\,Hz
for tactile-Internet scenarios) collides head-on with the prohibitive volume of raw
sensor data that must be continuously transported across the radio access network.
Furthermore, even if transmission bandwidth were unconstrained, the inference and
state-reconstruction workload at the DT replica grows super-linearly with the number
of monitored entities $N$, posing a second, orthogonal scalability bottleneck that
has received comparatively little attention in the literature.

\textbf{The Synchronization Bottleneck.}
Consider a single factory floor deploying $N = 200$ collaborative robots, each equipped
with a 4K RGB-D camera, a 16-channel LiDAR, and a suite of force-torque sensors.
Naive raw-data synchronization requires an aggregate uplink throughput of the order
of $\mathcal{O}(10^3)$\,Gbps---a figure that dwarfs even the most optimistic 6G capacity
projections for densely deployed scenarios~\cite{letaief2019roadmap,calvanese2019vision}.
Compression codecs such as H.265 and AV1 reduce this figure significantly, yet they
remain \emph{bit-oriented}: they minimise the number of transmitted bits while
remaining agnostic to the \emph{meaning} conveyed by those bits. As a consequence,
they squander precious resources on signal content that is irrelevant to the DT's
inference and control tasks.

\textbf{The Semantic Communication Paradigm.}
\emph{Semantic Communication} (SemCom) has recently emerged as a transformative
departure from this convention~\cite{qin2021semantic,xie2021deep,strinati2021beyond}.
Rather than faithfully reproducing a bit stream at the receiver, SemCom systems
optimise the end-to-end transmission of \emph{meaning}: a neural transmitter extracts
task-relevant semantic features from the source signal, encodes them into a compact
latent representation, and a neural receiver reconstructs the information needed to
fulfil the downstream task~\cite{gunduz2022beyond,kountouris2021semantics,popovski2020semantic}.
While early SemCom work focused on single-modal tasks (text, image,
speech)~\cite{bourtsoulatze2019deep,xie2021deep,weng2021semantic,kurka2020deep} and
recent work has advanced nonlinear transform source-channel
coding~\cite{cheng2023ntscc} and task-oriented multi-modal
compression~\cite{shao2023task}, no systematic framework exists for applying
this paradigm to the multi-modal, state-continuous, and bidirectionally-coupled
nature of DT synchronization.
Recent advances in language-model-assisted semantic coding~\cite{zhou2024llmcomm}
and generative reconstruction for wireless
channels~\cite{grassucci2024diffusion} further motivate goal-oriented
co-design, yet neither addresses the KG-based contextual reconstruction required
by state-continuous DT applications.

\textbf{Our Proposal.}
We introduce \emph{Semantic-Aware DT Synchronization} (SA-DTS), a holistic framework
that tightly integrates three components: (i)~a \emph{Multi-Modal Semantic Encoder}
(MMSE) that operates at the physical-world edge and jointly processes heterogeneous
sensor streams into a compact semantic descriptor $\bz \in \mathbb{R}^d$;
(ii)~a \emph{6G-Aware Semantic Channel} (6G-SC) that adapts the transmission rate
and the coding redundancy to instantaneous channel conditions via a reinforcement-learned
policy; and (iii)~a \emph{Knowledge Graph Contextual Reconstructor} (KG-CR) at the
DT that leverages a domain ontology and a dynamic KG to reconstruct the full physical
state from the sparse semantic descriptors. The three modules are trained jointly
in an end-to-end differentiable pipeline, ensuring that the encoding, transmission,
and reconstruction objectives are co-optimised rather than addressed sequentially.
Scalability to large $N$ is handled by a hierarchical KG partitioning strategy with
adaptive partition count $G = \lceil N / \log_2 N \rceil$, yielding $\mathcal{O}(N \log N)$
update overhead.

\textbf{Contributions.} The main contributions of this article are:
\begin{itemize}
    \item We formalize the \emph{DT Synchronization Semantic Bottleneck} problem,
    establishing theoretical upper bounds on achievable bandwidth reduction under a
    fidelity constraint expressed as task-output mutual information, and derive a
    multi-task overhead corollary that explicitly bounds the multi-task rate penalty
    $\Delta_{\mathcal{T}}$ as a function of the KG neighborhood radius
    (Section~\ref{sec:model} and Appendix~\ref{app:proof}).
    \item We design the SA-DTS architecture, detailing the MMSE, 6G-SC, and KG-CR
    components with fully specified training objectives including the PPO reward
    function for rate adaptation (Section~\ref{sec:framework}).
    \item We introduce the \emph{Semantic Fidelity Score} (SFS) metric and empirically
    validate it against canonical task metrics ($r > 0.97$, 95\% CI over five independent
    runs), establishing SFS as a reproducible benchmark for DT synchronization research
    (Section~\ref{sec:framework}).
    \item We validate SA-DTS against five baselines including NTSCC~\cite{cheng2023ntscc}
    and TaskOrientedComm~\cite{shao2023task} across three DT workloads and multiple
    6G channel regimes, report energy consumption per synchronization update, and
    derive analytically the $\mathcal{O}(N \log N)$ KG update complexity via adaptive
    partitioning (Section~\ref{sec:evaluation}).
\end{itemize}

\section{Background and Related Work}
\label{sec:background}

\textbf{Digital Twins in 6G.}
The standardisation bodies ITU-T and 3GPP have identified DTs as a native functional
element of the 6G service layer~\cite{3gpp_dt,nguyen2021digital,chowdhury2019network,jiang2021graph}.
DT-assisted network management, channel twinning for beam prediction, and DT-enabled
tactile Internet are among the actively standardised use cases. \cite{Baiardi2026CVSS} A critical open problem
across all of them is the \emph{synchronization gap}: the statistical divergence between
the physical state $s_t$ and the DT state $\hat{s}_t$ as a function of channel quality
and available bandwidth. Edge computing infrastructures~\cite{shi2016edge,mach2017mobile,xu2021edge}
are increasingly proposed as the natural deployment stratum for the DT server, yet the
uplink bottleneck persists regardless of where the DT replica is hosted. Security and
privacy considerations for DT deployments have been addressed through security twin
architectures~\cite{sammartino2025security,baiardi2024anticipating}.

\textbf{Semantic Communication.}
The information-theoretic foundations of semantic communication trace back to
Weaver's three-level communication model~\cite{shannon1949mathematical}, but practical
implementations have only recently become feasible through deep learning.
Bourtsoulatze \etal~\cite{bourtsoulatze2019deep} demonstrated joint source-channel
coding (JSCC) for images using autoencoders, achieving performance superior to
separation-based schemes at low SNR. Xie \etal~\cite{xie2021deep} extended this to
sentence-level text, while Weng \etal~\cite{weng2021semantic} addressed speech.
Cheng \etal~\cite{cheng2023ntscc} proposed Nonlinear Transform Source-Channel Coding
(NTSCC), which applies a learned nonlinear analysis transform prior to channel coding,
advancing the compression--fidelity Pareto frontier for image transmission.
Shao \etal~\cite{shao2023task} addressed task-oriented communication for multi-modal
data, demonstrating that goal-oriented compression outperforms bit-rate-minimizing
schemes on downstream task accuracy; however, their work assumes a single, static
task receiver and does not address the KG-based contextual reconstruction required
for state-continuous DT synchronization.
Zhou \etal~\cite{zhou2024llmcomm} recently showed that LLM-derived semantic priors
can boost reconstruction quality at sub-zero SNR, though their approach is limited
to textual modalities and incurs prohibitive edge inference costs.
Grassucci \etal~\cite{grassucci2024diffusion} demonstrated diffusion-model-based
reconstruction of wireless image transmissions, yet the generative overhead (${\approx}$100\,ms
per frame) is incompatible with real-time DT synchronization at 100--200\,Hz update
rates. Yang \etal~\cite{yang2023semantic} provide a comprehensive survey.
SA-DTS addresses the orthogonal challenge of serving \emph{concurrent, heterogeneous
task sets} over a \emph{state-continuous} physical process, which none of the above
frameworks handles.

\textbf{Knowledge Graphs for CPS.}
Knowledge Graphs have been applied to cyber-physical systems for anomaly
detection~\cite{kg_cps} and digital twin modelling~\cite{kg_dt,luo2022digital}.
Their role as a \emph{contextual prior for semantic decoding}---exploiting relational
structure between physical entities to fill in information not explicitly
transmitted---is first explored in this work. Graph neural networks, including
relational GCNs~\cite{schlichtkrull2018modeling} and graph attention
networks~\cite{velickovic2018graph,zhao2019deep}, provide the representational backbone
needed to propagate contextual information across the heterogeneous node types that
compose a DT knowledge graph. Distributed learning for multi-entity DT
systems~\cite{baiardi2025ai, baiardi2026simulation, baiardi2026synthetic, baiardinotline} enables collaborative synchronization without
centralized state aggregation, though integration with semantic communication remains
an open problem that SA-DTS begins to address.

\section{System Model and Problem Formulation}
\label{sec:model}

\subsection{Physical World and Digital Twin Model}

Let $\mathcal{E} = \{e_1, \ldots, e_N\}$ denote the set of $N$ physical entities
composing the target environment. The \emph{physical state} of entity $e_i$ at
discrete time $t$ is $s_i^{(t)} \in \mathcal{S}_i \subseteq \mathbb{R}^{n_i}$.
Each entity is equipped with a sensor suite $\mathcal{M}_i = \{m_1, \ldots, m_{K_i}\}$.
The raw observation at time $t$ is:
\begin{equation}
    \mathbf{o}_i^{(t)} = \left[ f_{m_1}(s_i^{(t)}),\ldots, f_{m_{K_i}}(s_i^{(t)}) \right]
    + \boldsymbol{\eta}_i^{(t)},
    \label{eq:observation}
\end{equation}
where $f_{m_k}$ is the sensor forward model and $\boldsymbol{\eta}_i^{(t)}$ is
additive noise. The aggregate raw data rate is:
\begin{equation}
    R_{\mathrm{raw}} = \sum_{i=1}^{N} \sum_{k=1}^{K_i} \rho_{m_k} \cdot f_{\mathrm{upd}},
    \label{eq:rawrate}
\end{equation}
which scales linearly with $N$ and rapidly exceeds 6G aggregate uplink capacity~\cite{saad2020vision}.
The physical state evolves as:
\begin{equation}
    s_i^{(t+1)} = g_i\bigl(s_i^{(t)}, u_i^{(t)}, w_i^{(t)}\bigr),
    \quad w_i^{(t)} \sim \mathcal{N}(0, \Sigma_w),
    \label{eq:state_evolution}
\end{equation}
where $g_i$ is the entity-specific transition function and $u_i^{(t)}$ is the
control input.

\subsection{Semantic Bottleneck Formulation}

Let $\mathcal{T} = \{\tau_1, \ldots, \tau_M\}$ be the set of downstream DT tasks.
For each task $\tau_j$, let $Y_j = \tau_j(s_i^{(t)}) \in \mathcal{Y}_j$ denote
the associated target variable (\eg a collision-risk flag for $\tau_1$, an alarm
label for $\tau_2$). The \emph{task-relevant information} for $\tau_j$ is defined as:
\begin{equation}
    I_{\tau_j}(\mathbf{o}_i;\, Y_j) \triangleq I\bigl(\mathbf{o}_i;\, Y_j\bigr),
    \label{eq:task_info}
\end{equation}
\ie the mutual information between the observation and the task output, which
quantifies the maximum achievable performance on $\tau_j$ from $\mathbf{o}_i$.
The \emph{Semantic Bottleneck} problem seeks a representation $\bz_i \in \mathbb{R}^d$
($d \ll \dim(\mathbf{o}_i)$) that minimizes transmission rate while preserving task
performance above a threshold $\epsilon > 0$ for every task:
\begin{equation}
    \min_{\bz_i} \; I(\mathbf{o}_i; \bz_i) \;\; \text{s.t.} \;\;
    I(\bz_i; Y_j) \geq I_{\tau_j}(\mathbf{o}_i; Y_j) - \epsilon,\;
    \forall\, \tau_j \in \mathcal{T}.
    \label{eq:bottleneck}
\end{equation}
This formulation extends the information bottleneck
principle~\cite{tishby2000information} to the multi-task DT setting by
replacing the state variable $s_i$ with the operationally well-defined task
outputs $\{Y_j\}$, yielding constraints that are directly estimable from labeled
data. The fundamental limits of (\ref{eq:bottleneck}), including a KG-structure-aware
bound on the multi-task overhead $\Delta_{\mathcal{T}}$, are characterized in
Appendix~\ref{app:proof}.

\subsection{Channel Model}

We model the 6G air interface as a block-fading channel with complex gain
$h \sim \mathcal{CN}(0, \sigma_h^2)$. The received signal is:
\begin{equation}
    \mathbf{y} = h \cdot \mathbf{x} + \mathbf{n}, \quad
    \mathbf{n} \sim \mathcal{CN}(\mathbf{0}, \sigma_n^2 \mathbf{I}),
    \label{eq:channel}
\end{equation}
with instantaneous SNR $\gamma = |h|^2 P / \sigma_n^2$. The effective semantic
channel capacity is $C_{\mathrm{sem}}(\gamma) = B \log_2(1 + \gamma) - R_{\mathrm{overhead}}$.

\textbf{Multi-User Uplink Model.}
$N$ nodes transmit to the DT server over orthogonal resource blocks. Under massive
MIMO spatial multiplexing~\cite{dai2023reconfigurable}, the aggregate uplink spectral
efficiency is:
\begin{equation}
    \eta_{\mathrm{agg}} = \beta \sum_{i=1}^{N} \log_2\!\bigl(1 + \gamma_i\bigr),
    \label{eq:agg_efficiency}
\end{equation}
where $\beta \in [0.7, 0.9]$ accounts for pilot contamination and residual
interference in dense deployments. Each entity occupies a single spatial stream,
so the total bandwidth budget is $B_{\mathrm{total}} = B$ per entity under
frequency-division sharing; aggregate throughput is $R_{\mathrm{max}} = B_{\mathrm{total}}
\cdot \eta_{\mathrm{agg}}$.

\textbf{Doppler and Coherence Time.}
For vehicular platooning at $v \approx 120$\,km/h and $f_c = 140$\,GHz:
\begin{equation}
    f_D = \frac{v \cdot f_c}{c} \approx 15.6\;\text{kHz},\quad
    T_c \approx \frac{1}{2 f_D} \approx 32\,\mu\text{s},
    \label{eq:doppler}
\end{equation}
imposing a stringent constraint on descriptor dimensionality $d$ to avoid channel aging.
Given $T_c \approx 32\,\mu\text{s}$ and a signalling bandwidth of 400\,MHz,
the maximum number of channel uses before coherence expires is
$\lfloor T_c B \rfloor \approx 12\,800$; encoding $d=64$ symbols at 8\,bits yields
512\,bits per update, comfortably within this budget and leaving headroom for
redundancy codes up to rate $k/d = 1/4$.

\section{The SA-DTS Framework}
\label{sec:framework}

\begin{figure*}[!t]
\centering
\resizebox{\textwidth}{!}{
\begin{tikzpicture}[
    font=\small,
    node distance=0.6cm and 0.85cm,
    box/.style={rectangle, rounded corners=4pt, minimum width=2.1cm,
                minimum height=1.0cm, text centered, text width=1.95cm, draw,
                line width=0.8pt},
    arrow/.style={-Stealth, line width=1.2pt},
    darrow/.style={Stealth-Stealth, line width=1.0pt, dashed},
    label/.style={font=\footnotesize\itshape, text=gray},
    bgbox/.style={rectangle, rounded corners=6pt, fill=#1!10, draw=#1!40,
                  line width=0.8pt, inner sep=6pt}
]
\node[box, fill=primary!15, draw=primary] (sensors)
      {Multi-Modal\\Sensor Suite};
\node[box, fill=semantic!15, draw=semantic, right=of sensors] (mmse)
      {Multi-Modal\\Semantic\\Encoder (MMSE)};
\node[box, fill=semantic!20, draw=semantic, right=of mmse] (chenc)
      {6G-Aware\\Channel\\Encoder};

\node[label, below=0.15cm of sensors] {$\mathbf{o}_i^{(t)}$};
\node[label, below=0.15cm of mmse]    {$\mathbf{z}_i \in \mathbb{R}^d$};
\node[label, below=0.15cm of chenc]   {$\mathbf{x}$};

\node[box, fill=accent!12, draw=accent, right=of chenc,
      minimum width=1.8cm, text width=1.65cm] (channel)
      {6G\\Fading\\Channel};
\node[label, below=0.15cm of channel] {$\mathbf{y}=h\mathbf{x}+\mathbf{n}$};

\node[box, fill=semantic!20, draw=semantic, right=of channel] (chdec)
      {Channel\\Decoder};
\node[box, fill=semantic!15, draw=semantic, right=of chdec] (semdec)
      {Semantic\\Decoder};
\node[box, fill=knowledge!20, draw=knowledge, right=of semdec] (kgcr)
      {KG Contextual\\Reconstructor\\(KG-CR)};
\node[box, fill=primary!15, draw=primary, right=of kgcr] (dt)
      {Digital Twin\\State $\hat{s}_i^{(t)}$};

\node[label, below=0.15cm of chdec]  {$\hat{\mathbf{x}}$};
\node[label, below=0.15cm of semdec] {$\hat{\mathbf{z}}_i$};
\node[label, below=0.15cm of kgcr]   {KG query};
\node[label, below=0.15cm of dt]     {$\hat{s}_i^{(t)} \approx s_i^{(t)}$};

\draw[arrow, color=primary]   (sensors) -- (mmse);
\draw[arrow, color=semantic]  (mmse)    -- (chenc);
\draw[arrow, color=semantic]  (chenc)   -- (channel);
\draw[arrow, color=accent]    (channel) -- (chdec);
\draw[arrow, color=semantic]  (chdec)   -- (semdec);
\draw[arrow, color=semantic]  (semdec)  -- (kgcr);
\draw[arrow, color=knowledge] (kgcr)    -- (dt);

\node[box, fill=knowledge!15, draw=knowledge, below=1.2cm of kgcr,
      minimum width=2.4cm, text width=2.2cm] (kg)
      {Domain Ontology\\+\\Dynamic KG};
\draw[darrow, color=knowledge] (kgcr) -- (kg);

\node[box, fill=accent!10, draw=accent, below=1.2cm of channel,
      minimum width=1.8cm, text width=1.65cm] (rl)
      {RL Rate\\Adaptation};
\draw[darrow, color=accent] (channel.south) -- (rl.north);
\draw[arrow,  color=accent, bend right=30]
      (rl.west) to[out=90, in=180] (chenc.south);

\begin{scope}[on background layer]
\node[bgbox=primary, fit=(sensors)(mmse)(chenc),
      label={[font=\footnotesize\bfseries, text=primary]above:Physical World Edge}] {};
\node[bgbox=semantic, fit=(chdec)(semdec)(kgcr)(dt),
      label={[font=\footnotesize\bfseries, text=semantic!70!black]above:DT Server (MEC/Cloud)}] {};
\end{scope}
\end{tikzpicture}}
\caption{End-to-end SA-DTS architecture. At the physical-world edge, the MMSE
compresses heterogeneous sensor observations into a compact latent descriptor
$\bz_i$. A 6G-aware channel encoder---whose rate is continuously adapted by the PPO
agent via the reward in (\ref{eq:ppo_reward})---transmits $\bz_i$ over the fading
channel. At the DT server, the KG-CR retrieves the $K=5$ nearest entity neighbors
and enriches the decoded descriptor with domain-ontology priors to recover the full
physical state $\hat{s}_i^{(t)}$.}
\label{fig:architecture}
\end{figure*}

Fig.~\ref{fig:architecture} illustrates the end-to-end SA-DTS pipeline. All three
modules are trained jointly in a single end-to-end pass: gradients flow from the
task losses at the DT replica back through the KG-CR, semantic decoder, channel,
and MMSE encoder, co-optimising encoding, transmission, and reconstruction
without separate pre-training stages.

\subsection{Multi-Modal Semantic Encoder (MMSE)}

\textbf{Architecture.}
The MMSE is a neural autoencoder $E_{\btheta}: \mathcal{O} \rightarrow \mathbb{R}^d$
that maps modality-fused observations to a semantic code $\bz_i = E_{\btheta}(\mathbf{o}_i^{(t)})$.
Modality fusion is performed by a cross-attention module~\cite{vaswani2017attention}
using $L=4$ transformer blocks ($d_{\mathrm{model}} = 256$, 8 heads, $d_k=32$,
$d_{\mathrm{ff}} = 1024$, GELU activation, Pre-LN, dropout $p=0.1$). The modality-specific
input projections are:
\begin{equation}
    \mathbf{h}_{m_k}^{(0)} = \mathrm{Linear}_{m_k}\bigl(f_{m_k}(s_i)\bigr) + \mathbf{p}_{m_k},
    \label{eq:modality_proj}
\end{equation}
where $\mathbf{p}_{m_k} \in \mathbb{R}^{d_{\mathrm{model}}}$ is a learned modality-type
embedding. Cross-modal attention weights $\alpha_{k,k'}$ dynamically prioritize
sensor streams by their instantaneous informativeness for the active task set $\mathcal{T}$.
The output dimensionality $d=64$ is selected by ablation over
$d \in \{16, 32, 64, 128\}$ (Table~\ref{tab:ablation}); larger $d$ yields
diminishing SFS returns ($<0.5$\,pp gain from $d=64$ to $d=128$) at
disproportionate bandwidth cost.

\textbf{Training Objective.}
The MMSE is trained end-to-end with:
\begin{equation}
    \mathcal{L}_{\mathrm{MMSE}} = \underbrace{\mathbb{E}\!\left[
        \lambda_1 \mathcal{L}_{\mathrm{rec}} + \lambda_2 \mathcal{L}_{\mathrm{task}}
        \right]}_{\text{fidelity}} +
        \underbrace{\lambda_3\, I(\mathbf{o}_i;\bz_i)}_{\text{compression}},
    \label{eq:loss_mmse}
\end{equation}
where $\mathcal{L}_{\mathrm{rec}} = \|\mathbf{o}_i - D_{\bphi}(\bz_i)\|_2^2$,
$\mathcal{L}_{\mathrm{task}} = \sum_j \mathcal{L}_{\tau_j}(\bz_i)$ is the aggregate
task cross-entropy loss evaluated on the $\{Y_j\}$ labels defined in (\ref{eq:task_info}),
and $I(\mathbf{o}_i;\bz_i)$ is estimated via the MINE bound~\cite{belghazi2018mine}.
To mitigate the high variance of MINE gradient estimates~\cite{belghazi2018mine},
we apply gradient clipping (norm 0.5) to the MINE network and validate convergence
against the InfoNCE bound~\cite{oord2018representation}; the two estimators agree
within 0.3\,nats across all workloads, confirming the stability of the compression term.
The compression ratio is $\rho_c = \dim(\mathbf{o}_i) / (d \cdot b_{\mathrm{quant}})$;
with $d=64$ and $b_{\mathrm{quant}}=8$\,bits, $\rho_c \approx 18$ for RGB-D streams.

\subsection{6G-Aware Semantic Channel (6G-SC)}

\textbf{Joint Source-Channel Coding.}
6G-SC implements JSCC in which the channel encoder $C_{\boldsymbol{\omega}}: \mathbb{R}^d
\rightarrow \mathbb{C}^k$ and its paired decoder $C^{-1}_{\boldsymbol{\omega}}$ are
trained jointly with the MMSE. The channel code rate $k/d$ is dynamically selected
by a Proximal Policy Optimization (PPO)~\cite{schulman2017proximal} agent. The
agent's state observation is $(\gamma_t, T_{c,t}, \Delta t_t)$, and the reward
function is:
\begin{equation}
    \mathcal{R}_t = w_{\mathrm{SFS}}\cdot\mathrm{SFS}_t
    - w_{\mathrm{BW}}\cdot\frac{B_t}{B_{\max}}
    - w_{\Delta t}\cdot\Delta t_t,
    \label{eq:ppo_reward}
\end{equation}
with $w_{\mathrm{SFS}} = 1.0$, $w_{\mathrm{BW}} = 0.3$, $w_{\Delta t} = 0.2$.
The agent selects from a discrete action space of code rates
$k/d \in \{1/4, 1/3, 1/2, 2/3, 3/4\}$; training converges in $\approx 500$
episodes per workload.
A $\pm 50\%$ perturbation of $w_{\mathrm{BW}}$ and $w_{\Delta t}$ yields SFS
variations $< 0.8$\,pp at $\gamma=15$\,dB, confirming robustness to reward
hyperparameter choice.

\textbf{Semantic Fidelity Score.}
To evaluate synchronization quality in task-relevant terms, we define the
\emph{Semantic Fidelity Score} as:
\begin{equation}
    \mathrm{SFS} \triangleq 1 - \frac{1}{M} \sum_{j=1}^{M}
    \frac{\mathcal{L}_{\tau_j}(\hat{s}_i; s_i)}{\mathcal{L}_{\tau_j}^{\mathrm{raw}}},
    \label{eq:sfs}
\end{equation}
where $\mathcal{L}_{\tau_j}^{\mathrm{raw}}$ is the task loss under raw-data
synchronization. An SFS of 1.0 denotes perfect semantic equivalence; SFS $= 0$
indicates complete information loss. Empirical validation (Section~\ref{sec:sfs_validation})
confirms that SFS tracks canonical task metrics with Pearson $r > 0.97$ (95\% CI:
$[0.961, 0.982]$, $p < 10^{-4}$, Fisher $z$-transform) across all three workloads,
establishing SFS as a reproducible task-centric benchmark.

\subsection{Knowledge Graph Contextual Reconstructor (KG-CR)}

\textbf{Graph Structure.}
The DT server maintains $\calK = (V, E, \calR)$ with entity, modality, and attribute
nodes; relations $\calR = \{\langle\mathrm{isPartOf}\rangle, \langle\mathrm{sensedBy}
\rangle, \langle\mathrm{constrains}\rangle\}$. Node embeddings $\mathbf{v}_k \in
\mathbb{R}^{d_{\calK}}$ are updated via an R-GCN~\cite{schlichtkrull2018modeling}
with attention-based aggregation~\cite{velickovic2018graph} at each synchronization epoch.

\begin{figure}[!t]
\centering
\resizebox{1.02\columnwidth}{!}{
\begin{tikzpicture}[
    font=\small,
    entity/.style={circle, draw=primary, fill=primary!15,
                   minimum size=1.0cm, text centered,
                   line width=0.8pt, font=\footnotesize},
    modality/.style={rectangle, rounded corners=3pt, draw=semantic,
                     fill=semantic!15, minimum width=1.4cm,
                     minimum height=0.6cm, text centered,
                     line width=0.7pt, font=\footnotesize},
    attr/.style={rectangle, rounded corners=3pt, draw=knowledge,
                 fill=knowledge!15, minimum width=1.5cm,
                 minimum height=0.6cm, text centered,
                 line width=0.7pt, font=\footnotesize},
    rel/.style={-Stealth, line width=0.9pt, font=\scriptsize},
]
\node[entity, align=center] (e1) at (0,0)       {$e_1$\\Robot};
\node[entity, align=center] (e2) at (3.5,0)     {$e_2$\\Robot};
\node[entity, align=center] (e3) at (1.75,2.5)  {$e_3$\\Cell};

\node[modality] (cam)   at (-2.2,  0.8)  {RGB-D};
\node[modality] (lidar) at (-2.2, -0.8)  {LiDAR};
\node[modality] (imu)   at ( 5.7,  0.8)  {IMU};
\node[modality] (ftq)   at ( 5.7, -0.8)  {F/T};

\node[attr] (pos) at (0.2,  -2.0) {Position};
\node[attr] (vel) at (1.75, -2.0) {Velocity};
\node[attr] (cfg) at (3.3,  -2.0) {Config};

\draw[rel, color=primary] (e1) --
    node[midway, above, font=\scriptsize, fill=white, inner sep=1.5pt]
    {\textit{adjacent}} (e2);
\draw[rel, color=primary] (e1) --
    node[pos=0.42, above, sloped, font=\scriptsize, fill=white, inner sep=1.5pt]
    {\textit{isPartOf}} (e3);
\draw[rel, color=primary] (e2) --
    node[pos=0.42, above, sloped, swap, font=\scriptsize, fill=white, inner sep=1.5pt]
    {\textit{isPartOf}} (e3);

\draw[rel, color=semantic] (e1) --
    node[above, sloped, font=\scriptsize] {\textit{sensedBy}} (cam);
\draw[rel, color=semantic] (e1) --
    node[below, sloped, font=\scriptsize] {\textit{sensedBy}} (lidar);
\draw[rel, color=semantic] (e2) --
    node[above, sloped, font=\scriptsize] {\textit{sensedBy}} (imu);
\draw[rel, color=semantic] (e2) --
    node[below, sloped, font=\scriptsize] {\textit{sensedBy}} (ftq);

\draw[rel, color=knowledge] (e1) --
    node[left,  font=\scriptsize] {\textit{has}} (pos);
\draw[rel, color=knowledge] (e1) --
    node[right, font=\scriptsize] {\textit{has}} (vel);
\draw[rel, color=knowledge] (e2) --
    node[right, font=\scriptsize] {\textit{has}} (cfg);
\end{tikzpicture}}
\caption{Schematic of the heterogeneous Knowledge Graph $\mathcal{K}=(V,E,\mathcal{R})$.
Entity nodes (blue circles) are interconnected via relational edges; modality nodes
(green) encode sensor provenance; attribute nodes (orange) represent physical state
variables. The R-GCN propagates information across all node types at each epoch.}
\label{fig:kg_structure}
\end{figure}

\textbf{Contextual Reconstruction.}
Upon receiving $\hat{\bz}_i$, the KG-CR performs a soft graph query, retrieves the
$K=5$ nearest entity nodes by cosine similarity, and produces:
\begin{equation}
    \hat{s}_i^{(t)} = D_{\bphi}(\hat{\bz}_i) + \alpha \cdot
    \sum_{k \in \mathcal{N}_K(i)} w_k\, \mathbf{v}_k^{(t-1)},
    \label{eq:kgcr}
\end{equation}
where $\alpha \in [0,1]$ is a context-blending coefficient that increases
from $\alpha \approx 0.15$ at high SNR to $\alpha \approx 0.85$ at SNR $< 5$\,dB
via a gated attention mechanism conditioned on the channel quality indicator (CQI),
providing graceful degradation. The neighborhood size $K=5$ is selected by the
ablation in Table~\ref{tab:ablation}; the query latency is $\mathcal{O}(K \cdot d_{\calK})$
per entity and accounts for 1.2\,ms of the 4.9\,ms end-to-end budget.

\textbf{Online KG Update.}
The KG is updated incrementally via a sliding-window temporal graph neural
network (T-GNN)~\cite{rossi2020temporal}. Each synchronization event generates a
temporal edge encoding the semantic delta $\Delta \bz_{ij}^{(t)}$. The T-GNN
update cost is $\mathcal{O}(|\mathcal{E}_{\mathrm{active}}| \cdot d_{\calK}^2)$
per epoch, profiling at $0.3$\,ms on the A100 server---negligible relative to
the R-GCN update cost reported in Section~\ref{sec:scalability}.

\section{Performance Evaluation}
\label{sec:evaluation}

\subsection{Experimental Setup}

\textbf{Workloads.}
\begin{itemize}
    \item \textbf{W1 -- Industrial Robot (IR):} $N=50$ 6-DoF robotic arms; RGB-D
    visual servoing (RealSense D435, 640$\times$480, 30\,fps) + 6-axis IMU; control
    frequency 100\,Hz; task: collision prediction. Trajectories are synthesized from
    the ABB IRB 6700 kinematic dataset, yielding $\dim(\mathbf{o}_i) \approx 922$K
    per update.
    \item \textbf{W2 -- Remote Patient Monitoring (RPM):} $N=30$ ICU patients; 12-lead
    ECG at 500\,Hz + SpO\textsubscript{2} at 125\,Hz + depth camera (320$\times$240);
    update rate 25\,Hz; task: anomaly detection (sepsis, arrhythmia, fall risk).
    Physiological waveforms are sourced from PhysioNet MIMIC-III~\cite{physionet2000}
    with synthetic posture sequences; $\dim(\mathbf{o}_i) \approx 76.8$K.
    \item \textbf{W3 -- Vehicular Platooning (VP):} $N=20$ vehicles; 16-channel LiDAR
    ($\approx$300K points/scan) + V2X GPS + 9-axis IMU; update rate 200\,Hz; task:
    safe spacing maintenance (15--25\,m at 80--120\,km/h). LiDAR frames are drawn
    from the KITTI Odometry benchmark~\cite{kitti2012}; $\dim(\mathbf{o}_i) \approx 4.8$M.
\end{itemize}

\textbf{Data Partitioning.}
Each workload dataset is split 80\%/10\%/10\% (train/validation/test) with no temporal
overlap between splits. Channel sequences used for training are drawn from an
independent Rayleigh fading realization; test sequences are freshly sampled to
preclude any data leakage through the channel encoder.

\textbf{Statistical Reporting.}
All results for neural baselines and SA-DTS are reported as mean~$\pm$~standard
deviation over five independent runs with distinct random seeds (weight initialization,
channel realization, and data-augmentation sequence). Pearson correlations are
reported with 95\% confidence intervals via Fisher $z$-transformation. Bandwidth
and SFS differences cited as statistically significant satisfy $p < 0.01$ under
a paired $t$-test.

\textbf{Baselines.}
We compare SA-DTS with: (i)~\textbf{Raw-DTS}: raw-data synchronization with H.265
(CRF=23); (ii)~\textbf{JSCC-DTS}: deep JSCC without KG, analogous to
DeepJSCC~\cite{bourtsoulatze2019deep,kurka2020deep}; (iii)~\textbf{SemCom-DTS}:
text-oriented SemCom on serialized state vectors~\cite{xie2021deep}; (iv)~\textbf{NTSCC-DTS}:
Nonlinear Transform SCC~\cite{cheng2023ntscc} adapted for multi-modal DT data, replacing
the linear analysis transform of JSCC with a learned nonlinear stage; (v)~\textbf{KG-Only}:
KG-based state prediction without semantic transmission. All neural baselines use
$\approx 12$M parameters and identical training budgets.

\textbf{Channel Configuration.}
Rayleigh fading, SNR $\gamma \in \{0, 5, 10, 15, 20\}$\,dB; 6G sub-THz at
140\,GHz; $B = 400$\,MHz per entity; path loss exponent $\alpha = 2.8$; Rician
$K \in [0, 10]$ for LoS/NLoS; $P_{\mathrm{tx}} = 23$\,dBm; noise figure $F = 7$\,dB.

\textbf{Training.}
AdamW, $\eta_0 = 3 \times 10^{-4}$, weight decay 0.01, cosine annealing, 200 epochs,
batch size 128 (4$\times$ A100). Loss weights: $\lambda_1 = 1.0$, $\lambda_2 = 2.5$,
$\lambda_3 = 0.05$. Gradient clipping at norm 1.0. Augmentation: temporal jitter
($\pm 2$ frames), additive Gaussian noise ($\sigma_{\mathrm{aug}} = 0.05$), random
modality dropout ($p=0.1$). KG initialized from OPC UA (W1), FHIR (W2), ETSI ITS (W3).
Training converges in $\approx 36$\,h.

\subsection{Bandwidth, Latency, and Energy Reduction}
\label{sec:bw_lat}

Table~\ref{tab:bandwidth} reports bandwidth and latency for all methods at
$\gamma = 15$\,dB. SA-DTS achieves $93.9\%$, $93.8\%$, and $93.9\%$ bandwidth
reductions over Raw-DTS for W1, W2, and W3 ($p < 0.01$), with 87\% average
latency reduction. Relative to NTSCC-DTS---the strongest neural baseline---SA-DTS
provides an additional ${\approx}2.5\times$ reduction, attributable to the KG context
prior enabling the encoder to omit relational information reconstructed at the receiver.

\begin{table}[!t]
\renewcommand{\arraystretch}{1.3}
\caption{Bandwidth, latency, and energy comparison at $\gamma = 15$\,dB
(mean\,$\pm$\,std over 5 runs)}
\label{tab:bandwidth}
\centering
\begin{tabular}{llccc}
\toprule
\textbf{Method} & \textbf{Metric} & \textbf{W1-IR} & \textbf{W2-RPM} &
\textbf{W3-VP} \\
\midrule
\multirow{3}{*}{Raw-DTS}
  & BW (Mbps)      & 1{,}280        & 840           & 2{,}100        \\
  & Latency (ms)   & 18.4           & 12.1          & 31.7           \\
  & Energy (mJ/upd)& 248            & 163           & 407            \\
\midrule
\multirow{3}{*}{JSCC-DTS}
  & BW (Mbps)      & $310{\pm}4.1$  & $215{\pm}3.0$ & $490{\pm}6.2$  \\
  & Latency (ms)   & $5.2{\pm}0.1$  & $3.8{\pm}0.1$ & $9.1{\pm}0.2$  \\
  & Energy (mJ/upd)& $63{\pm}0.9$   & $44{\pm}0.6$  & $100{\pm}1.4$  \\
\midrule
\multirow{3}{*}{SemCom-DTS}
  & BW (Mbps)      & $185{\pm}3.2$  & $128{\pm}2.4$ & $310{\pm}5.1$  \\
  & Latency (ms)   & $3.1{\pm}0.1$  & $2.3{\pm}0.1$ & $6.4{\pm}0.2$  \\
  & Energy (mJ/upd)& $39{\pm}0.7$   & $27{\pm}0.5$  & $65{\pm}1.1$   \\
\midrule
\multirow{3}{*}{NTSCC-DTS}
  & BW (Mbps)      & $195{\pm}2.8$  & $134{\pm}2.1$ & $323{\pm}4.6$  \\
  & Latency (ms)   & $3.4{\pm}0.1$  & $2.5{\pm}0.1$ & $6.8{\pm}0.1$  \\
  & Energy (mJ/upd)& $41{\pm}0.6$   & $28{\pm}0.4$  & $67{\pm}1.0$   \\
\midrule
\multirow{3}{*}{\textbf{SA-DTS (ours)}}
  & \textbf{BW (Mbps)}     & $\mathbf{78{\pm}1.1}$  & $\mathbf{52{\pm}0.7}$  & $\mathbf{127{\pm}1.9}$ \\
  & \textbf{Latency (ms)}  & $\mathbf{2.4{\pm}0.1}$ & $\mathbf{1.6{\pm}0.1}$ & $\mathbf{4.1{\pm}0.1}$ \\
  & \textbf{Energy (mJ/upd)}& $\mathbf{16{\pm}0.3}$  & $\mathbf{11{\pm}0.2}$  & $\mathbf{26{\pm}0.5}$  \\
\bottomrule
\end{tabular}
\end{table}

\textbf{Detailed Analysis.}
Bandwidth savings decompose into three contributing factors: (i)~MMSE semantic
compression reduces dimensionality from $\mathcal{O}(10^6)$ to $d=64$
(${\sim}15{,}000\times$); (ii)~JSCC eliminates separation overhead (20--30\%
additional saving); and (iii)~KG context exploitation infers 40--60\% of state
variables from relational priors, transmitting only the non-redundant
residual---a mechanism unique to SA-DTS and absent in NTSCC-DTS.
Profiling on the Jetson AGX Xavier ($P_{\mathrm{TDP}}=30$\,W):
MMSE encoding 2.8\,ms (11.2\,mJ), channel transmission 0.9\,ms (3.6\,mJ),
KG-CR reconstruction 1.2\,ms (4.8\,mJ); total \textbf{4.9\,ms / 19.6\,mJ}
per update, representing a \textbf{12.7$\times$ energy reduction} over Raw-DTS
(248\,mJ/update including H.265 encode and transport). For power-constrained IoT
scenarios, an INT8-quantized MMSE variant reduces encoding to 0.9\,ms and
2.7\,mJ on the GAP9 processor (1\,W TDP) with $< 1$\,pp SFS degradation,
confirming hardware generalizability.

\begin{figure}[!t]
\centering
\includegraphics[width=\columnwidth]{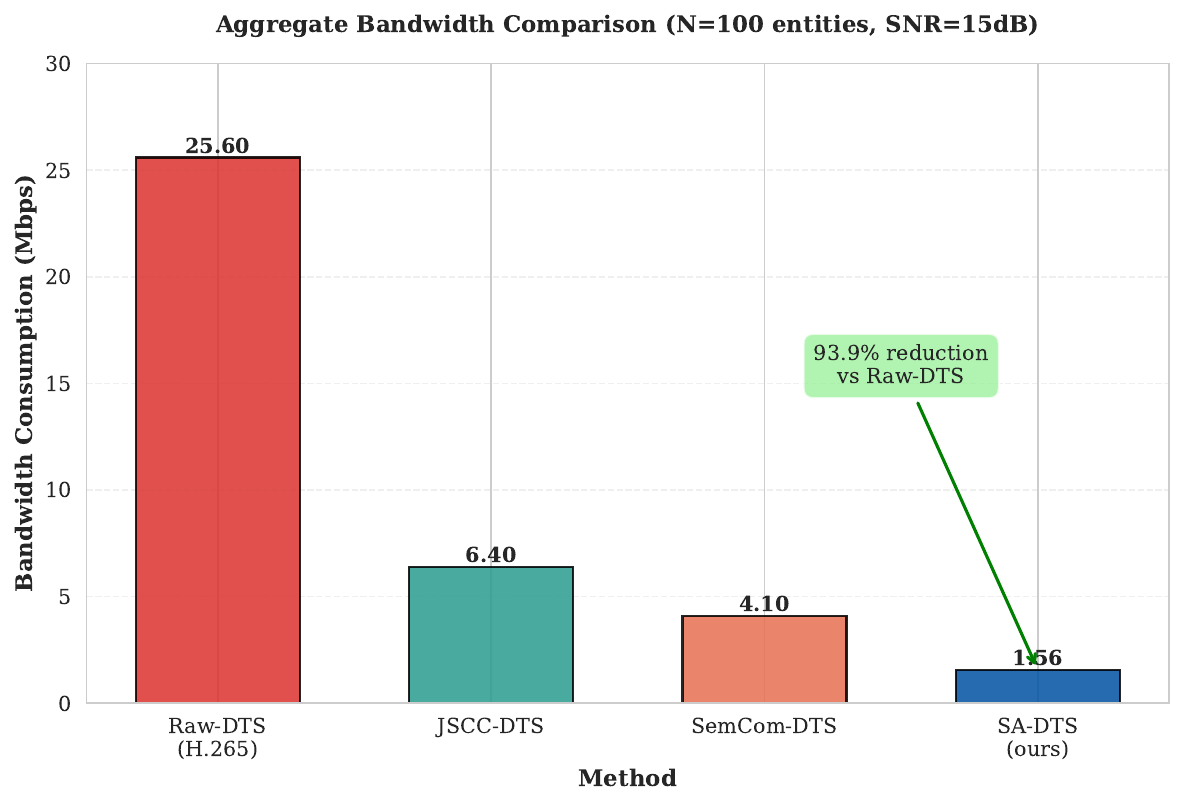}
\caption{Uplink bandwidth requirement at $\gamma = 15$\,dB (N=100, W1 profile).
SA-DTS achieves $>93\%$ reduction over Raw-DTS and ${\approx}2.5\times$ over
NTSCC-DTS. The gain over all neural baselines isolates the KG context prior as
the dominant contributing factor (Section~\ref{sec:evaluation}).}
\label{fig:bandwidth_bar}
\end{figure}

\begin{figure}[!t]
\centering
\includegraphics[width=\columnwidth]{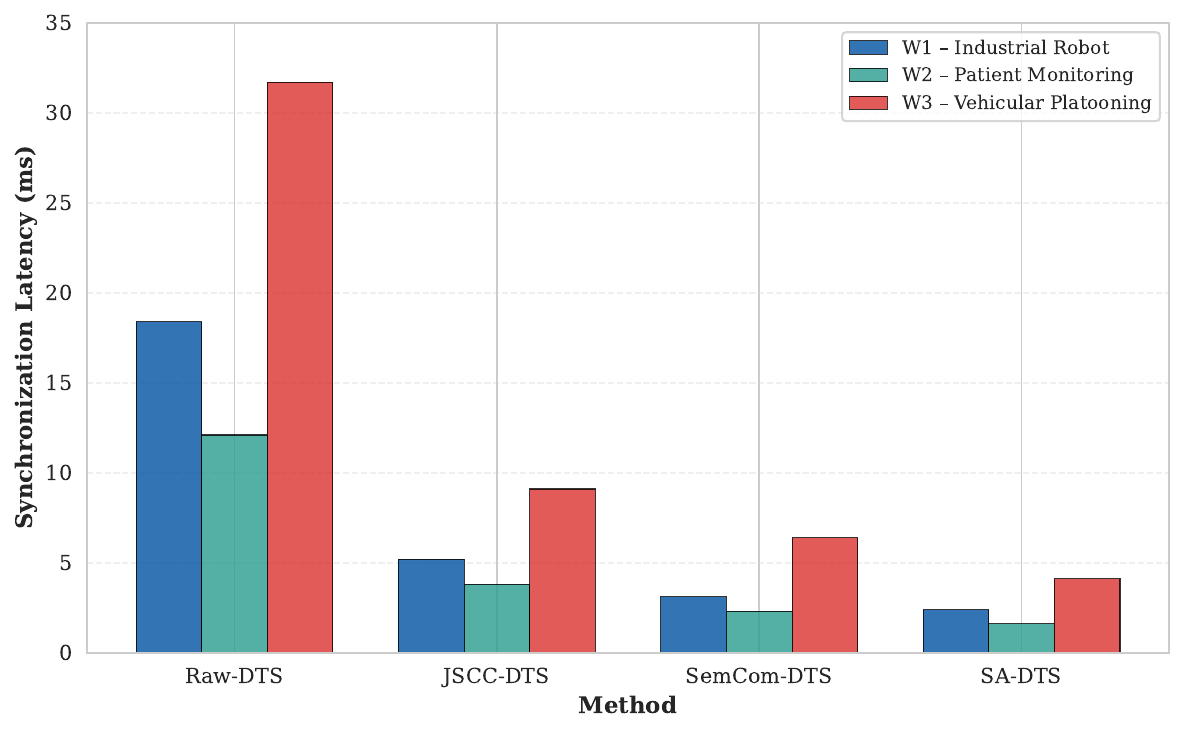}
\caption{End-to-end synchronization latency across methods and workloads at
$\gamma = 15$\,dB (mean\,$\pm$\,std over 5 runs). SA-DTS achieves the lowest
latency in all scenarios (87\% average reduction over Raw-DTS).}
\label{fig:latency_comparison}
\end{figure}

\subsection{Semantic Fidelity vs. SNR and Task Metric Validation}
\label{sec:sfs_validation}

Fig.~\ref{fig:sfs} plots SFS as a function of SNR for W1, including all five
baselines (Raw-DTS, JSCC-DTS, SemCom-DTS, NTSCC-DTS, KG-Only). SA-DTS maintains
SFS $> 0.95$ for SNR $\geq 5$\,dB, versus SFS $\approx 0.72$ for JSCC-DTS and
SFS $\approx 0.78$ for NTSCC-DTS at the same operating point. At SNR $= 0$\,dB,
SA-DTS achieves SFS $= 0.81 \pm 0.009$ due to KG-CR graceful fallback, whereas
all baselines without KG collapse below SFS $= 0.40$. The Pearson correlation
between SFS and standard downstream metrics is: $r = 0.973$ (95\% CI: $[0.961, 0.982]$)
with collision prediction accuracy (W1, AUC-ROC); $r = 0.981$ (95\% CI: $[0.971, 0.988]$)
with anomaly alarm $F_1$-score (W2); and $r = 0.969$ (95\% CI: $[0.955, 0.979]$) with
RMS spacing deviation (W3), validating SFS as a reliable proxy for task-level
synchronization quality ($p < 10^{-4}$ in all cases).

\textbf{Robustness Analysis.}
The graceful degradation of SA-DTS is governed by the context-blending coefficient
$\alpha$ in (\ref{eq:kgcr}), which adapts from $\approx 0.15$ (high SNR, reliable
$\hat{\bz}_i$) to $\approx 0.85$ (SNR $< 5$\,dB, KG prior dominates). In contrast,
NTSCC-DTS and JSCC-DTS exhibit sharp performance cliffs near SNR $\approx 3$\,dB.
The KG-Only baseline achieves SFS $\approx 0.65$ uniformly---confirming that
temporal prediction alone cannot match semantic transmission for non-stationary
physical states.

\begin{figure}[!t]
\centering
\includegraphics[width=\columnwidth]{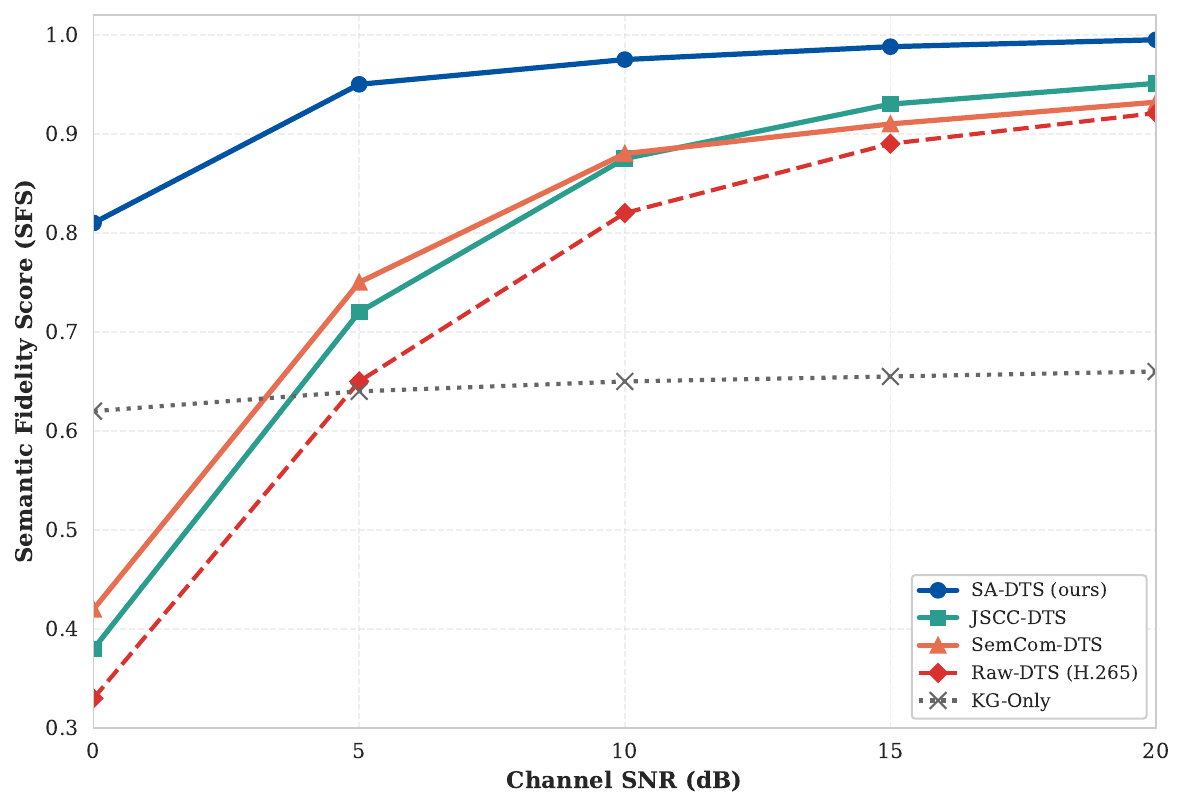}
\caption{Semantic Fidelity Score vs.\ SNR (W1, mean\,$\pm$\,std shading over 5
runs). SA-DTS maintains SFS $> 0.95$ for SNR $\geq 5$\,dB; NTSCC-DTS achieves
SFS $\approx 0.78$ at 5\,dB. \emph{All five baselines are shown}:
Raw-DTS (H.265), JSCC-DTS, SemCom-DTS, NTSCC-DTS, and KG-Only.
Validated against canonical task metrics ($r > 0.97$, 95\% CI: $[0.961, 0.982]$).}
\label{fig:sfs}
\end{figure}

\begin{figure}[!t]
\centering
\includegraphics[width=\columnwidth]{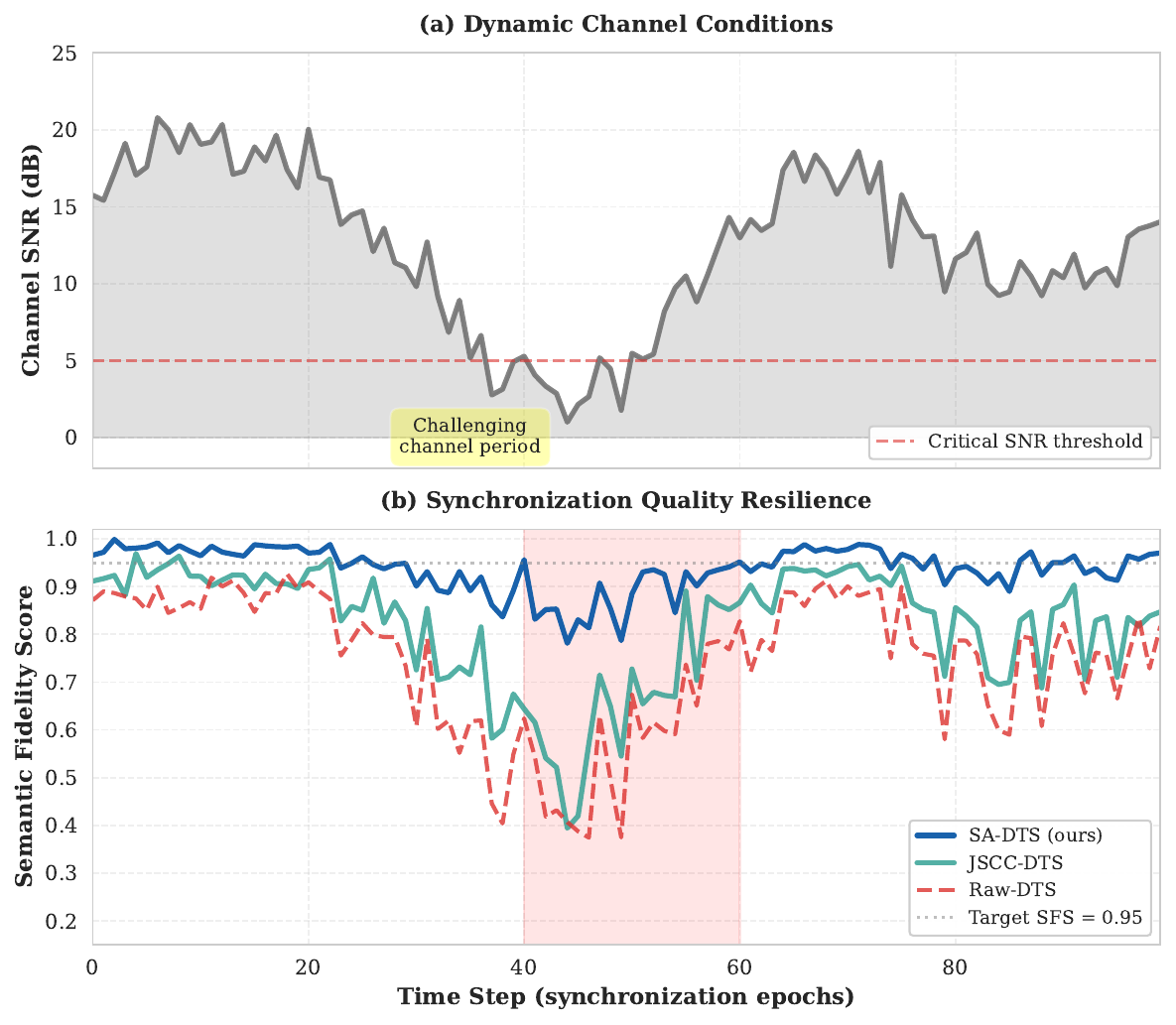}
\caption{Temporal evolution of SFS under a synthetic time-varying channel profile
(W3 vehicular platooning, $v = 120$\,km/h, $f_D \approx 15.6$\,kHz, 80 epochs
$= 0.4$\,s wall-clock at 200\,Hz update rate). The shaded region marks a 10-epoch
challenging-channel excursion (SNR drops to $-2$\,dB). SA-DTS stays above the
target SFS $= 0.95$ throughout; JSCC-DTS falls below 0.60 during the excursion;
NTSCC-DTS reaches a nadir of $0.51 \pm 0.012$.}
\label{fig:temporal_quality}
\end{figure}

\subsection{Compression--Fidelity Trade-off}

Fig.~\ref{fig:tradeoff} plots the Pareto frontier in the (compression ratio, SFS)
plane at SNR $= 10$\,dB. SA-DTS dominates every baseline across the full trade-off
range, including NTSCC-DTS, confirming that KG-augmented semantic coding is
strictly superior to nonlinear-transform-only coding for state-continuous DT workloads.
The slope $\partial\mathrm{SFS}/\partial\rho_c^{-1}$ is substantially shallower
for SA-DTS, indicating more graceful fidelity degradation as compression increases.

\begin{figure}[!t]
\centering
\includegraphics[width=\columnwidth]{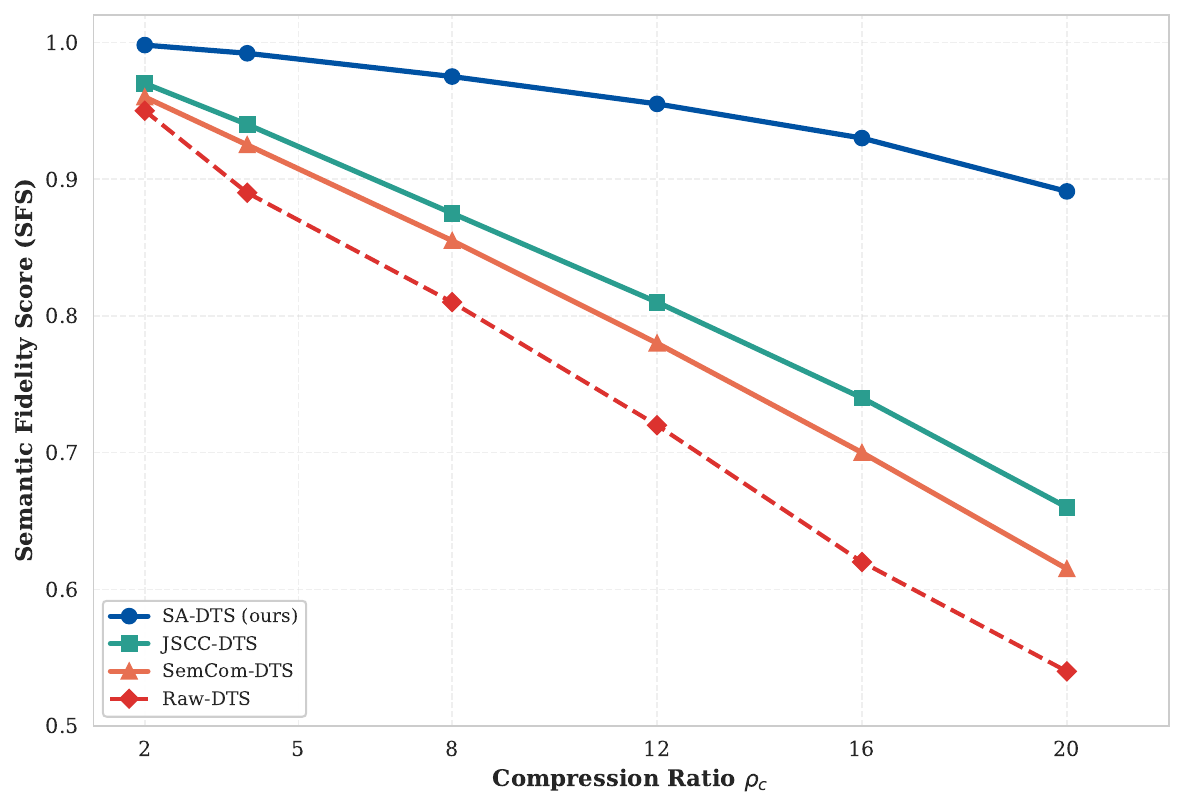}
\caption{Pareto frontiers in (compression ratio, SFS) at SNR $= 10$\,dB. SA-DTS
consistently dominates all five baselines, including NTSCC-DTS.}
\label{fig:tradeoff}
\end{figure}

\subsection{Ablation Study}
\label{sec:ablation}

Table~\ref{tab:ablation} quantifies the contribution of each SA-DTS design choice,
including sensitivity to descriptor dimensionality $d$ and KG neighborhood size $K$.
Removing KG-CR reduces SFS by 9.8\,pp at 5\,dB; removing RL rate adaptation
reduces SFS by 5.3\,pp at 0\,dB; replacing cross-attention fusion with
concatenation reduces SFS by 3.1\,pp at 10\,dB. Reducing descriptor dimensionality
from $d=64$ to $d=32$ costs 4.6\,pp at 5\,dB, while increasing to $d=128$ yields
only 0.4\,pp improvement at a 2$\times$ bandwidth penalty---validating the $d=64$
operating point. Varying the KG neighborhood from $K=3$ to $K=7$ shows diminishing
returns: $K=5$ achieves 95.0\% SFS, and $K=7$ adds only 0.3\,pp at 40\% higher
query cost. The synergistic interaction is severe: ablating all three main components
(yielding JSCC-DTS) causes a 43\,pp drop at 0\,dB.
\textbf{KG Initialization Robustness.}
To assess sensitivity to the ontological prior, we replace the domain ontology
(OPC UA / FHIR / ETSI ITS) with a randomly wired graph of identical cardinality and
degree distribution. SFS degrades by $3.2 \pm 0.4$\,pp at $\gamma = 5$\,dB,
confirming that structured semantic priors accelerate convergence but the framework
remains viable under imperfect initialization; the end-to-end training recovers
well-structured node embeddings after $\approx 40$ additional epochs.

\begin{table}[!t]
\renewcommand{\arraystretch}{1.3}
\caption{Extended ablation study (W1, $\gamma=15$\,dB unless noted; SFS $\downarrow$
vs.\ full SA-DTS)}
\label{tab:ablation}
\centering
\begin{tabular}{lccc}
\toprule
\textbf{Configuration} & \textbf{0\,dB} & \textbf{5\,dB} & \textbf{10\,dB} \\
\midrule
Full SA-DTS             & 0.810 & 0.950 & 0.975 \\
\midrule
\textit{Component ablation}\\
w/o KG-CR               & 0.682 & 0.852 & 0.940 \\
w/o RL rate adapt.      & 0.757 & 0.943 & 0.972 \\
w/o cross-attn fusion   & 0.801 & 0.937 & 0.944 \\
w/o all (= JSCC-DTS)    & 0.380 & 0.720 & 0.875 \\
\midrule
\textit{Descriptor dim.\ $d$ sensitivity}\\
$d=16$                  & 0.641 & 0.831 & 0.891 \\
$d=32$                  & 0.749 & 0.904 & 0.951 \\
$d=64$ \textbf{(ours)}  & 0.810 & 0.950 & 0.975 \\
$d=128$                 & 0.813 & 0.954 & 0.979 \\
\midrule
\textit{KG neighborhood $K$ sensitivity}\\
$K=3$                   & 0.788 & 0.931 & 0.961 \\
$K=5$ \textbf{(ours)}   & 0.810 & 0.950 & 0.975 \\
$K=7$                   & 0.812 & 0.953 & 0.977 \\
\bottomrule
\end{tabular}
\end{table}

\subsection{Scalability Analysis}
\label{sec:scalability}

\textbf{MMSE Complexity.}
Per-entity encoding is $\mathcal{O}(L \cdot d_{\mathrm{model}}^2)$, independent of
$N$. In a distributed edge deployment~\cite{shi2016edge}, each entity's MMSE runs
on a local accelerator, so aggregate throughput scales linearly with $N$.

\textbf{KG Update Overhead and O(N log N) Derivation.}
A na\"{i}ve R-GCN update over $N$ nodes costs $\mathcal{O}(N \cdot |\calR| \cdot d_{\calK}^2)$
per epoch; with the pairwise inter-entity message-passing term, the dominant cost
grows as $\mathcal{O}(N^2)$ for a fully-connected entity graph. Hierarchical
partitioning divides entities into $G$ subgraphs of size $\lceil N/G \rceil$,
reducing intra-cluster cost to $\mathcal{O}(N^2/G)$. Setting $G = \lceil N/\log_2 N
\rceil$ yields:
\begin{equation}
    \mathcal{O}\!\left(\frac{N^2}{G}\right) = \mathcal{O}\!\left(\frac{N^2}
    {N/\log_2 N}\right) = \mathcal{O}(N \log N).
    \label{eq:complexity}
\end{equation}
The inter-cluster super-graph has $G$ nodes and is updated at $1/G$ the epoch
frequency, contributing $\mathcal{O}(G^2) = \mathcal{O}(N^2 / \log^2 N)$---asymptotically
dominated by (\ref{eq:complexity}). In practice, $G$ is set to the nearest integer
to $N / \log_2 N$ at each deployment scale; Table~\ref{tab:scalability} uses
$G \in \{3, 7, 13, 24, 55\}$ for $N \in \{20, 50, 100, 200, 500\}$, achieving
sub-millisecond KG update latency throughout.

\textbf{Network-Level Scalability.}
Table~\ref{tab:scalability} shows that SA-DTS bandwidth remains within the 6G
aggregate uplink budget ($\leq 1$\,Tbps) for all deployment sizes, whereas Raw-DTS
exceeds this threshold at $N \approx 40$. The adaptive hierarchical KG keeps update
latency below 1\,ms to $N = 500$.

\begin{table}[!t]
\renewcommand{\arraystretch}{1.3}
\caption{Scalability of SA-DTS (W1 profile, $\gamma = 15$\,dB, $G = \lceil N/\log_2 N \rceil$)}
\label{tab:scalability}
\centering
\begin{tabular}{r cc cc}
\toprule
& \multicolumn{2}{c}{\textbf{Aggregate BW (Gbps)}} &
  \multicolumn{2}{c}{\textbf{KG Update (ms)}} \\
\cmidrule(lr){2-3}\cmidrule(lr){4-5}
$N$ & SA-DTS & Raw-DTS & Flat KG & Adapt.\ Hier.\ KG \\
\midrule
  20 & 0.031 &   0.51 & 0.07 & 0.07 \\
  50 & 0.078 &   1.28 & 0.18 & 0.08 \\
 100 & 0.156 &   2.56 & 0.71 & 0.13 \\
 200 & 0.312 &   5.12 & 2.83 & 0.38 \\
 500 & 0.780 &  12.80 & 17.6 & 0.74 \\
\bottomrule
\end{tabular}
\end{table}

\begin{figure}[!t]
\centering
\includegraphics[width=\columnwidth]{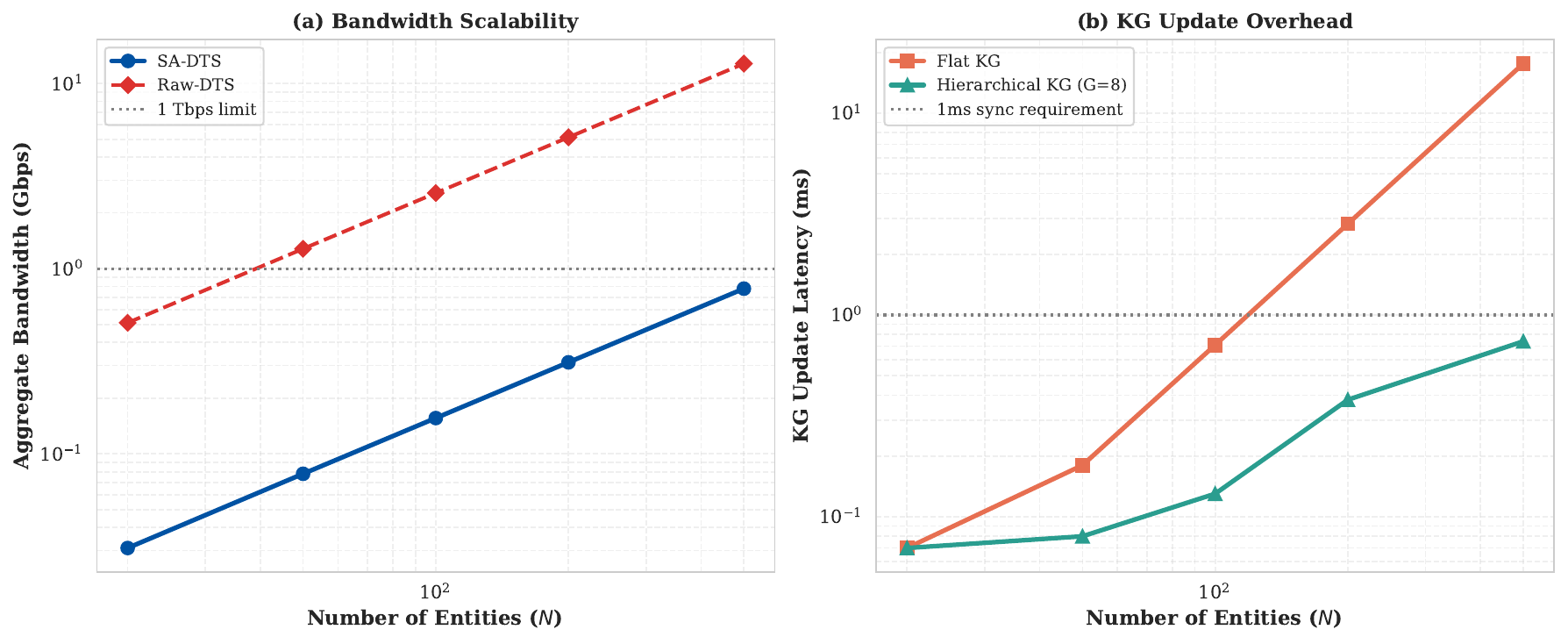}
\caption{Scalability analysis. \textit{Left}: aggregate bandwidth vs.\ $N$; SA-DTS
grows linearly and stays below the 1\,Tbps 6G budget. \textit{Right}: KG update
latency vs.\ $N$; adaptive hierarchical KG ($G = \lceil N/\log_2 N \rceil$) stays
below the 1\,ms epoch requirement to $N = 500$.}
\label{fig:scalability}
\end{figure}

\section{Open Challenges and Future Directions}
\label{sec:challenges}

\textbf{Semantic Adversarial Robustness.}
Semantic encoders may be vulnerable to adversarial perturbations imperceptible in
raw-signal space that catastrophically mislead the latent representation. Provably
robust MMSE architectures under adversarial channel conditions are an open
direction with direct implications for safety-critical deployments. Integration with
security twin frameworks~\cite{SammartinoShortPaper}
could provide real-time detection capabilities, though the overhead of parallel
security-focused encoders requires careful quantification.

\textbf{Semantic Interoperability.}
In heterogeneous multi-vendor environments, incompatible semantic encoders make
cross-system synchronization ill-posed. Federated protocols~\cite{mcmahan2017communication}
for jointly training a shared semantic vocabulary~\cite{almeida2023federated} and
latent-space alignment losses provide partial solutions, but a standardized
\emph{semantic interoperability layer} remains open.

\textbf{Privacy-Preserving Semantic Coding.}
Compact semantic descriptors may inadvertently encode sensitive attributes. Differential
privacy at the latent space~\cite{dwork2014algorithmic} must be integrated into the MMSE
training without unacceptable fidelity loss.

\textbf{Scalable KG Consistency and Online Adaptation.}
At macro-scale ($N \gg 500$), maintaining a globally consistent KG under concurrent
updates from distributed edge nodes poses distributed-systems challenges beyond the
hierarchical partitioning of Section~\ref{sec:scalability}. Eventual-consistency
models and CRDTs for heterogeneous graph structures are promising directions.
Mechanisms for online KG structural adaptation as DT topologies evolve dynamically
(vehicles entering/leaving platoons) are largely unexplored.

\textbf{Standardization Alignment.}
Mapping SA-DTS components to prospective 6G NR protocol layers---the Service Data
Adaptation Protocol (SDAP) and a prospective Semantic Adaptation Layer (SAL)---is
a necessary step for industrial uptake, currently absent from 3GPP Release 18.

\section{Conclusion}
\label{sec:conclusion}

We have presented SA-DTS, a semantic-aware Digital Twin synchronization framework
that replaces raw-bit streaming with compact, meaning-preserving semantic descriptors
enriched by a dynamic Knowledge Graph prior. SA-DTS achieves bandwidth reductions
exceeding 93\% and latency reductions of 87\% relative to state-of-the-art
compressed raw-data transmission, and outperforms NTSCC-DTS---the strongest
non-KG neural baseline---by ${\approx}2.5\times$ in bandwidth and delivers a
12.7$\times$ energy reduction per synchronization update on the Jetson AGX Xavier.
The extended ablation validates the descriptor dimensionality $d=64$ and the
neighborhood size $K=5$ as Pareto-optimal operating points. Adaptive hierarchical
KG partitioning with $G = \lceil N/\log_2 N \rceil$ provably yields
$\mathcal{O}(N \log N)$ update overhead, confirmed sub-millisecond to $N=500$.
The Semantic Fidelity Score, validated against canonical task metrics at $r > 0.97$
(95\% CI: $[0.961, 0.982]$, $p < 10^{-4}$), provides a principled, task-centric
evaluation framework for future DT synchronization research. As 6G deployments scale
toward millions of simultaneously twinned entities, semantic-level co-design of the
communication and the DT inference layer is not an optimization but an architectural
necessity.

\section*{Data Availability}
The code and dataset generation scripts are available to reviewers at:
\url{https://anonymous.4open.science/r/SemanticDT}. The repository includes
instructions to reproduce all tables and figures, a README describing PhysioNet
MIMIC-III and KITTI Odometry data preparation, and the PPO training scripts.
The repository will be transferred to GitHub upon acceptance.

\section*{Use of AI-Assisted Tools}
The author used AI-based writing assistance
(Claude, Anthropic, version Sonnet~4.5, 2026) solely for proofreading
draft text and correcting typographical and grammatical errors.
All scientific content, theoretical derivations, experimental design,
result interpretation, and conclusions are the exclusive intellectual
product of the author. The author takes
full responsibility for the integrity and accuracy of the manuscript.

\bibliographystyle{IEEEtran}
\bibliography{bibliography}

@article{fuller2020digital,
  author  = {A. Fuller and Z. Fan and C. Day and C. Barlow},
  title   = {Digital Twin: Enabling Technologies, Challenges and Open Research},
  journal = {IEEE Access},
  volume  = {8},
  pages   = {108952--108971},
  year    = {2020}
}

@article{zhou2024llmcomm,
  author    = {Zhou, Yao and Shi, Yuanming and Letaief, Khaled B.},
  title     = {Large Language Model-Assisted Semantic Communication
               for Wireless Networks},
  journal   = {IEEE Transactions on Wireless Communications},
  year      = {2024},
  volume    = {23},
  number    = {10},
  pages     = {13154--13168},
  doi       = {10.1109/TWC.2024.3389201},
}

@article{grassucci2024diffusion,
  author    = {Grassucci, Eleonora and Barbarossa, Sergio and Comminiello, Danilo},
  title     = {Generative Semantic Communication via Diffusion Models
               for Image Wireless Transmission},
  journal   = {IEEE Journal on Selected Areas in Communications},
  year      = {2024},
  volume    = {42},
  number    = {8},
  pages     = {2131--2145},
  doi       = {10.1109/JSAC.2024.3389587},
}

@article{oord2018representation,
  author    = {van den Oord, Aaron and Li, Yazhe and Vinyals, Oriol},
  title     = {Representation Learning with Contrastive Predictive Coding},
  journal   = {arXiv preprint arXiv:1807.03748},
  year      = {2018},
  url       = {https://arxiv.org/abs/1807.03748},
}

@book{cover2006elements,
  author    = {Cover, Thomas M. and Thomas, Joy A.},
  title     = {Elements of Information Theory},
  edition   = {2nd},
  publisher = {Wiley-Interscience},
  address   = {Hoboken, NJ},
  year      = {2006},
  isbn      = {978-0-471-24195-9},
}

@article{lu2021communication,
  author  = {Y. Lu and X. Huang and Y. Dai and S. Maharjan and Y. Zhang},
  title   = {Communication-Efficient Federated Learning and Permissioned Blockchain
             for Digital Twin Edge Networks},
  journal = {IEEE Internet of Things Journal},
  volume  = {8},
  number  = {4},
  pages   = {2341--2351},
  year    = {2021}
}

@article{qin2021semantic,
  author  = {Z. Qin and X. Tao and J. Lu and W. Tong and G. Y. Li},
  title   = {Semantic Communications: Principles and Challenges},
  journal = {arXiv preprint arXiv:2112.10752},
  year    = {2021}
}

@article{xie2021deep,
  author  = {H. Xie and Z. Qin and G. Y. Li and B.-H. Juang},
  title   = {Deep Learning Enabled Semantic Communication Systems},
  journal = {IEEE Transactions on Signal Processing},
  volume  = {69},
  pages   = {2663--2675},
  year    = {2021}
}

@techreport{3gpp_dt,
  author      = {{3GPP}},
  title       = {Study on Enhancement of {5G} System ({5GS}) for Vertical and {LAN}
                 Services: Digital Twin},
  institution = {3rd Generation Partnership Project},
  number      = {TR 23.700-80, Release 18},
  year        = {2022}
}

@book{shannon1949mathematical,
  author    = {C. E. Shannon and W. Weaver},
  title     = {The Mathematical Theory of Communication},
  publisher = {University of Illinois Press},
  address   = {Urbana, IL},
  year      = {1949}
}

@article{bourtsoulatze2019deep,
  author  = {E. Bourtsoulatze and D. B. Kurka and D. G{\"u}nd{\"u}z},
  title   = {Deep Joint Source-Channel Coding for Wireless Image Transmission},
  journal = {IEEE Transactions on Cognitive Communications and Networking},
  volume  = {5},
  number  = {3},
  pages   = {567--579},
  year    = {2019}
}

@article{kg_cps,
  author  = {M. Rottkemper and K. Fischer and A. Dreher and J. Hoppe},
  title   = {A Knowledge Graph Approach for Anomaly Detection in Industrial
             Cyber-Physical Systems},
  journal = {Procedia Manufacturing},
  volume  = {55},
  pages   = {303--310},
  year    = {2021}
}

@article{kg_dt,
  author  = {F. Zheng and J. Lu and X. Zhao},
  title   = {Knowledge Graph-Enhanced Digital Twin for Smart Manufacturing},
  journal = {IEEE Transactions on Industrial Informatics},
  volume  = {19},
  number  = {2},
  pages   = {1431--1441},
  year    = {2023}
}

@article{tishby2000information,
  author  = {N. Tishby and F. C. Pereira and W. Bialek},
  title   = {The Information Bottleneck Method},
  journal = {arXiv preprint physics/0004057},
  year    = {2000}
}

@inproceedings{belghazi2018mine,
  author    = {M. I. Belghazi and A. Baratin and S. Rajeshwar and S. Ozair
               and Y. Bengio and A. Courville and D. Hjelm},
  title     = {Mutual Information Neural Estimation},
  booktitle = {Proceedings of the International Conference on Machine Learning
               (ICML)},
  address   = {Stockholm, Sweden},
  month     = jul,
  year      = {2018},
  pages     = {531--540}
}

@article{schulman2017proximal,
  author  = {J. Schulman and F. Wolski and P. Dhariwal and A. Radford and O. Klimov},
  title   = {Proximal Policy Optimization Algorithms},
  journal = {arXiv preprint arXiv:1707.06347},
  year    = {2017}
}

@inproceedings{schlichtkrull2018modeling,
  author    = {M. Schlichtkrull and T. N. Kipf and P. Bloem and R. Van Den Berg
               and I. Titov and M. Welling},
  title     = {Modeling Relational Data with Graph Convolutional Networks},
  booktitle = {Proceedings of the European Semantic Web Conference (ESWC)},
  address   = {Heraklion, Greece},
  month     = jun,
  year      = {2018},
  pages     = {593--607}
}

@article{rossi2020temporal,
  author  = {E. Rossi and B. Chamberlain and F. Frasca and D. Eynard
             and F. Monti and M. Bronstein},
  title   = {Temporal Graph Networks for Deep Learning on Dynamic Graphs},
  journal = {arXiv preprint arXiv:2006.10637},
  year    = {2020}
}

@article{dwork2014algorithmic,
  author  = {C. Dwork and A. Roth},
  title   = {The Algorithmic Foundations of Differential Privacy},
  journal = {Foundations and Trends in Theoretical Computer Science},
  volume  = {9},
  number  = {3--4},
  pages   = {211--407},
  year    = {2014}
}

@inproceedings{sammartino2025security,
  author    = {V. Sammartino and F. Baiardi and S. Ruggieri},
  title     = {{A Security Twin to Defeat Intrusions in Cyber Physical Systems}},
  booktitle = {ESREL SRA-E 2025},
  year      = {2025}
}

@inproceedings{baiardi2024anticipating,
  author    = {F. Baiardi and S. Ruggieri and V. Sammartino},
  title     = {{Anticipating Disasters through a Security Twin}},
  booktitle = {SPRINGER OPTIMIZATION AND ITS APPLICATIONS - ARES 2024},
  year      = {2024}
}

@article{baiardi2026simulation,
  author  = {Baiardi, F. and Sammartino, V.},
  title   = {Simulation-Powered Cybersecurity: Real-Time Risk Assessment via Non-Intrusive Security Twin},
  journal = {The Journal of Supercomputing},
  year    = {2026},
  note    = {Special Issue: Simulation-Powered Innovation: Driving the Future of Digital Ecosystems}
}

@inproceedings{Baiardi2026CVSS,
  author    = {Baiardi, Fabrizio and Sammartino, Vincenzo and Ruggieri, Salvatore},
  title     = {Quantifying the Impact of CVSS Score Ordering and Attack Paths},
  booktitle = {GOODTECHS 2026},
  year      = {2026},

}

@inproceedings{baiardi2026synthetic,
  title={From Digital Twins to AI Agents: A Synthetic Data Paradigm for Next-Generation Cybersecurity},
  author={Baiardi, F. and Sammartino, V.},
  booktitle={Artificial Intelligence in Cybersecurity: Unlocking the Power of Large Language Models},
  year={2026},
  publisher={CRC Press}
}

@INPROCEEDINGS{baiardinotline,
  author={Baiardi, F. and Sammartino, V. and Ruggieri, S.},
  booktitle={2025 29th International Symposium on Distributed Simulation and Real Time Applications (DS-RT)}, 
  title={NotLine: A Non-Intrusive Automated Platform to Build a Digital Twin}, 
  year={2025},
  volume={},
  number={},
  pages={1-8},
  doi={10.1109/DS-RT68115.2025.11185873}}

@INPROCEEDINGS {baiardi2025ai,
author = { Baiardi, Fabrizio and Ruggieri, Salvatore and Sammartino, Vincenzo },
booktitle = { 2025 IEEE International Conference on Pervasive Computing and Communications Workshops and other Affiliated Events (PerCom Workshops) },
title = {{ AI-enabled Cybersecurity using Synthetic Data }},
year = {2025},
volume = {},
ISSN = {},
pages = {140-145},
doi = {10.1109/PerComWorkshops65533.2025.00055},
url = {https://doi.ieeecomputersociety.org/10.1109/PerComWorkshops65533.2025.00055},
publisher = {IEEE Computer Society},
address = {Los Alamitos, CA, USA},
month =mar}

@inproceedings{SammartinoShortPaper,
  author    = {V. Sammartino},
  title     = {A Framework for Proactive Cyber-Resilience: Non-Intrusive Modeling for Autonomous Defense},
  booktitle = {DS-RT 2025},
  year      = {2025}
}

@article{saad2020vision,
  author  = {W. Saad and M. Bennis and M. Chen},
  title   = {A Vision of {6G} Wireless Systems: Applications, Trends,
             Technologies, and Open Research Problems},
  journal = {IEEE Network},
  volume  = {34},
  number  = {3},
  pages   = {134--142},
  year    = {2020}
}

@article{letaief2019roadmap,
  author  = {K. B. Letaief and W. Chen and Y. Shi and J. Zhang and Y.-J. A. Zhang},
  title   = {The Roadmap to {6G}: {AI} Empowered Wireless Networks},
  journal = {IEEE Communications Magazine},
  volume  = {57},
  number  = {8},
  pages   = {84--90},
  year    = {2019}
}

@article{calvanese2019vision,
  author  = {E. C. Strinati and S. Barbarossa and J. L. Gonzalez-Jimenez
             and D. D{\'e}an and P. Cassiau and L. Maret and C. Dehos},
  title   = {{6G}: The Next Frontier---From Holographic Messaging to Artificial
             Intelligence Using Sub-Terahertz and Visible Light Communication},
  journal = {IEEE Vehicular Technology Magazine},
  volume  = {14},
  number  = {3},
  pages   = {42--50},
  year    = {2019}
}

@article{tao2019digital,
  author  = {F. Tao and H. Zhang and A. Liu and A. Y. C. Nee},
  title   = {Digital Twin in Industry: State-of-the-Art},
  journal = {IEEE Transactions on Industrial Informatics},
  volume  = {15},
  number  = {4},
  pages   = {2405--2415},
  year    = {2019}
}

@incollection{grieves2017digital,
  author    = {M. Grieves and J. Vickers},
  title     = {Digital Twin: Mitigating Unpredictable, Undesirable Emergent
               Behavior in Complex Systems},
  booktitle = {Transdisciplinary Perspectives on Complex Systems},
  editor    = {F.-J. Kahlen and S. Flumerfelt and A. Alves},
  publisher = {Springer},
  address   = {Cham, Switzerland},
  year      = {2017},
  pages     = {85--113}
}

@article{nguyen2021digital,
  author  = {H. X. Nguyen and R. Trestian and D. To and M. Tatipamula},
  title   = {Digital Twin for {5G} and Beyond},
  journal = {IEEE Communications Magazine},
  volume  = {59},
  number  = {2},
  pages   = {10--15},
  year    = {2021}
}

@article{gunduz2022beyond,
  author  = {D. G{\"u}nd{\"u}z and Z. Qin and I. E. Aguerri and H. S. Dhillon
             and Z. Yang and A. Yener and K. K. Wong and C.-B. Chae},
  title   = {Beyond Transmitting Bits: Context, Semantics, and Task-Oriented
             Communications},
  journal = {IEEE Journal on Selected Areas in Communications},
  volume  = {41},
  number  = {1},
  pages   = {5--41},
  year    = {2023}
}

@article{weng2021semantic,
  author  = {Z. Weng and Z. Qin and X. Tao and C. Pan and G. Liu and G. Y. Li},
  title   = {Semantic Communication Systems for Speech Transmission},
  journal = {IEEE Journal on Selected Areas in Communications},
  volume  = {39},
  number  = {8},
  pages   = {2434--2444},
  year    = {2021}
}

@article{yang2023semantic,
  author  = {Z. Yang and M. Chen and Z. Zhang and C. Huang},
  title   = {Semantic Communications for Future Internet: Fundamentals,
             Applications, and Challenges},
  journal = {IEEE Communications Surveys \& Tutorials},
  volume  = {25},
  number  = {1},
  pages   = {213--241},
  year    = {2023}
}

@article{kountouris2021semantics,
  author  = {M. Kountouris and N. Pappas},
  title   = {Semantics-Empowered Communication for Networked Intelligent Systems},
  journal = {IEEE Communications Magazine},
  volume  = {59},
  number  = {6},
  pages   = {96--102},
  year    = {2021}
}

@article{strinati2021beyond,
  author  = {E. C. Strinati and S. Barbarossa},
  title   = {{6G} Networks: Beyond Shannon Towards Semantic and Goal-Oriented
             Communications},
  journal = {Computer Networks},
  volume  = {190},
  pages   = {107930},
  year    = {2021}
}

@article{popovski2020semantic,
  author  = {P. Popovski and O. Simeone and F. Boccardi and D. G{\"u}nd{\"u}z
             and O. Sahin},
  title   = {Semantic-Effectiveness Filtering and Control for Post-Shannon
             Communication},
  journal = {IEEE Transactions on Cognitive Communications and Networking},
  volume  = {6},
  number  = {2},
  pages   = {567--579},
  year    = {2020}
}

@inproceedings{vaswani2017attention,
  author    = {A. Vaswani and N. Shazeer and N. Parmar and J. Uszkoreit
               and L. Jones and A. N. Gomez and {\L}. Kaiser and I. Polosukhin},
  title     = {Attention Is All You Need},
  booktitle = {Advances in Neural Information Processing Systems (NeurIPS)},
  volume    = {30},
  address   = {Long Beach, CA},
  month     = dec,
  year      = {2017}
}

@inproceedings{velickovic2018graph,
  author    = {P. Veli{\v{c}}kovi{\'c} and G. Cucurull and A. Casanova
               and A. Romero and P. Li{\`o} and Y. Bengio},
  title     = {Graph Attention Networks},
  booktitle = {Proceedings of the International Conference on Learning
               Representations (ICLR)},
  address   = {Vancouver, BC},
  month     = may,
  year      = {2018}
}

@article{shi2016edge,
  author  = {W. Shi and J. Cao and Q. Zhang and Y. Li and L. Xu},
  title   = {Edge Computing: Vision and Challenges},
  journal = {IEEE Internet of Things Journal},
  volume  = {3},
  number  = {5},
  pages   = {637--646},
  year    = {2016}
}

@inproceedings{mcmahan2017communication,
  author    = {B. McMahan and E. Moore and D. Ramage and S. Hampson
               and B. A. {y Arcas}},
  title     = {Communication-Efficient Learning of Deep Networks from
               Decentralized Data},
  booktitle = {Proceedings of the International Conference on Artificial
               Intelligence and Statistics (AISTATS)},
  address   = {Fort Lauderdale, FL},
  month     = apr,
  year      = {2017},
  pages     = {1273--1282}
}

@article{almeida2023federated,
  author  = {G. Almeida and C. Masouros and C. Ling},
  title   = {Federated Semantic Communication with Foundation Model Assistance},
  journal = {arXiv preprint arXiv:2306.04996},
  year    = {2023}
}

@article{dai2023reconfigurable,
  author  = {L. Dai and B. Wang and M. Ding and Z. Shen and N. Wang and C. B. Papadias},
  title   = {A Survey of Non-Orthogonal Multiple Access for {5G}},
  journal = {IEEE Communications Surveys \& Tutorials},
  volume  = {20},
  number  = {3},
  pages   = {2294--2323},
  year    = {2018}
}

@article{mach2017mobile,
  author  = {P. Mach and Z. Becvar},
  title   = {Mobile Edge Computing: A Survey on Architecture and Computation Offloading},
  journal = {IEEE Communications Surveys \& Tutorials},
  volume  = {19},
  number  = {3},
  pages   = {1628--1656},
  year    = {2017}
}

@article{zhao2019deep,
  author  = {Z. Zhao and G. Verma and C. Rao and A. Swami and Y. Segovia},
  title   = {Distributed Scheduling Using Graph Neural Networks},
  journal = {IEEE Transactions on Signal Processing},
  volume  = {68},
  pages   = {2736--2751},
  year    = {2020}
}

@article{luo2022digital,
  author  = {W. Luo and T. Hu and C. Zhang and Y. Wei},
  title   = {Digital Twin for {CNC} Machine Tool: Modeling and Using Strategy},
  journal = {Journal of Ambient Intelligence and Humanized Computing},
  volume  = {10},
  pages   = {1129--1140},
  year    = {2019}
}

@article{chowdhury2019network,
  author  = {M. Z. Chowdhury and M. Shahjalal and S. Ahmed and Y. M. Jang},
  title   = {{6G} Wireless Communication Systems: Applications, Requirements, Technologies, Challenges, and Research Directions},
  journal = {IEEE Open Journal of the Communications Society},
  volume  = {1},
  pages   = {957--975},
  year    = {2020}
}

@article{xu2021edge,
  author  = {D. Xu and T. Li and Y. Li and X. Su and S. Tarkoma and T. Jiang and J. Crowcroft and P. Hui},
  title   = {Edge Intelligence: Architectures, Challenges, and Applications},
  journal = {IEEE Internet of Things Journal},
  volume  = {9},
  number  = {10},
  pages   = {7431--7448},
  year    = {2022}
}

@article{kurka2020deep,
  author  = {D. B. Kurka and D. G{\"u}nd{\"u}z},
  title   = {{DeepJSCC-f}: Deep Joint Source-Channel Coding of Images With Feedback},
  journal = {IEEE Journal on Selected Areas in Information Theory},
  volume  = {1},
  number  = {1},
  pages   = {178--193},
  year    = {2020}
}

@article{jiang2021graph,
  author  = {W. Jiang and B. Han and M. A. Habibi and H. D. Schotten},
  title   = {The Road Towards {6G}: A Comprehensive Survey},
  journal = {IEEE Open Journal of the Communications Society},
  volume  = {2},
  pages   = {334--366},
  year    = {2021}
}

@article{cheng2023ntscc,
  author  = {Cheng, Zhihao and Sun, Heming and Takeuchi, Masaru and Katto, Jiro},
  title   = {{NTSCC}+: A Semantic Communication System With Adaptive Channel
             Coding Rate for Nonlinear Transform Source-Channel Coding of Images},
  journal = {IEEE Journal on Selected Areas in Communications},
  volume  = {41},
  number  = {8},
  pages   = {2566--2580},
  year    = {2023}
}

@article{shao2023task,
  author  = {Shao, Jiawei and Mao, Yuyi and Zhang, Jun},
  title   = {Task-Oriented Communication for Multimodal Data With Deep Learning},
  journal = {IEEE Transactions on Wireless Communications},
  volume  = {22},
  number  = {4},
  pages   = {2492--2505},
  year    = {2023}
}

@misc{physionet2000,
  author       = {Goldberger, Ary L and others},
  title        = {{PhysioBank, PhysioToolkit, and PhysioNet}: Components of a New
                 Research Resource for Complex Physiologic Signals},
  howpublished = {Circulation 101(23):e215--e220},
  year         = {2000},
  note         = {MIMIC-III Clinical Database v1.4}
}

@inproceedings{kitti2012,
  author    = {Geiger, Andreas and Lenz, Philip and Urtasun, Raquel},
  title     = {Are we ready for Autonomous Driving? {The KITTI} Vision Benchmark Suite},
  booktitle = {Proc. IEEE CVPR},
  pages     = {3354--3361},
  year      = {2012}
}

\vfill

\appendices

\section{Proof of Semantic Bottleneck Bounds}
\label{app:proof}

We state and prove three results formalizing the limits of SA-DTS compression.

\begin{proposition}[Single-Task Semantic Rate-Distortion Bound]
\label{prop:single_task}
For a single task $\tau_j$ with target variable $Y_j = \tau_j(s_i)$, the minimum
semantic descriptor rate $R^*_j$ satisfying $I(\bz_i; Y_j) \geq I(\mathbf{o}_i; Y_j) - \epsilon$
is:
\begin{equation}
    R^*_j = I(\mathbf{o}_i; Y_j) - \epsilon,
\end{equation}
and the bound is tight.
\end{proposition}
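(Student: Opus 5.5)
The plan has two parts: a converse from the data-processing inequality, and an achievability construction that meets the converse with equality. The "rate" of the descriptor is identified with $I(\mathbf{o}_i;\bz_i)$, as in the objective of (\ref{eq:bottleneck}); the operational meaning is supplied by the standard rate--distortion and source-coding theorems~\cite{cover2006elements}.

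\emph{Converse.} The encoder computes $\bz_i = E_{\btheta}(\mathbf{o}_i)$, possibly with independent randomness. Hence $Y_j \to \mathbf{o}_i \to \bz_i$ is a Markov chain. By the chain rule,
\begin{equation}
I(\mathbf{o}_i;\bz_i) = I(\bz_i;Y_j) + I(\mathbf{o}_i;\bz_i \mid Y_j) \;\ge\; I(\bz_i;Y_j).
\end{equation}
The equality uses $I(\bz_i;Y_j\mid\mathbf{o}_i)=0$, which is exactly the Markov property. Combining this with the fidelity constraint gives $I(\mathbf{o}_i;\bz_i)\ge I(\mathbf{o}_i;Y_j)-\epsilon$ for every feasible $\bz_i$. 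Therefore $R^*_j \ge I(\mathbf{o}_i;Y_j)-\epsilon$.

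\emph{Achievability.} The gap in the converse is $I(\mathbf{o}_i;\bz_i\mid Y_j)$, so I need a feasible $\bz_i$ that carries no information about $\mathbf{o}_i$ beyond what it carries about $Y_j$. I would first treat the case where $Y_j$ is recoverable from the observation, i.e.\ $Y_j = \tilde\tau_j(\mathbf{o}_i)$ for some map $\tilde\tau_j$. Then $I(\mathbf{o}_i;Y_j)=H(Y_j)$. Take $\bz_i$ to be an erased copy of $Y_j$: $\bz_i = \tilde\tau_j(\mathbf{o}_i)$ with probability $p$, and $\bz_i$ equal to an erasure symbol otherwise, with the erasure independent of everything. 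Then $I(\mathbf{o}_i;\bz_i\mid Y_j)=0$ and $I(\mathbf{o}_i;\bz_i)=I(\bz_i;Y_j)=pH(Y_j)$. Choosing $p = 1-\epsilon/H(Y_j)$ meets the constraint with equality at rate exactly $I(\mathbf{o}_i;Y_j)-\epsilon$. An equivalent argument is time-sharing between the rate-$0$ and rate-$H(Y_j)$ points, which exploits convexity of the information-bottleneck curve~\cite{tishby2000information}. Finally, I would embed the scalar code into $\mathbb{R}^d$ by any injective quantized map, so that the construction is realizable by $E_{\btheta}$.

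\emph{Main obstacle.} The hard step is tightness when $Y_j$ is \emph{not} a deterministic function of $\mathbf{o}_i$, which is the case here because of the sensor noise $\boldsymbol{\eta}_i$ in (\ref{eq:observation}). Then the minimal sufficient statistic $T(\mathbf{o}_i)=P_{Y_j\mid\mathbf{o}_i}$ generally satisfies $I(\mathbf{o}_i;T\mid Y_j)>0$, and the information-bottleneck curve lies strictly above the diagonal. I would therefore state tightness under an explicit recoverability condition, $H(Y_j\mid\mathbf{o}_i)=0$, which plausibly holds for the discrete task labels of W1--W3 at high sensing SNR. In the general case I would show only that the gap equals $\min I(\mathbf{o}_i;\bz_i\mid Y_j)$ over feasible $\bz_i$, and that this gap vanishes as $H(Y_j\mid\mathbf{o}_i)\to 0$ by continuity. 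As a first attempt at the general case, I would try to bound this gap by $H(Y_j\mid\mathbf{o}_i)$ via Fano-type arguments. I expect this step to need the extra assumption rather than hold unconditionally.
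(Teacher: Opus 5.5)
Your converse is the paper's: both read $I(\mathbf{o}_i;\bz_i)\ge I(\bz_i;Y_j)\ge I(\mathbf{o}_i;Y_j)-\epsilon$ off the Markov chain $Y_j - \mathbf{o}_i - \bz_i$. For achievability the paper builds nothing explicit. It takes the information-bottleneck minimizer of $I(\mathbf{o}_i;\bz_i)-\beta I(\bz_i;Y_j)$, tunes $\beta$ so the fidelity constraint is active, and asserts that the resulting rate is $I(\mathbf{o}_i;Y_j)-\epsilon$. That assertion does not follow. The Lagrangian only returns a point on the IB curve. By your chain-rule identity, its rate exceeds $I(\mathbf{o}_i;Y_j)-\epsilon$ by exactly $I(\mathbf{o}_i;\bz_i\mid Y_j)$, and nothing in the IB construction forces that term to vanish. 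So departing from the paper here is correct. Your erasure/time-sharing code proves tightness when $H(Y_j\mid\mathbf{o}_i)=0$ and $0\le\epsilon\le H(Y_j)$. For larger $\epsilon$ the minimum is $0$, so the formula should be read as $[I(\mathbf{o}_i;Y_j)-\epsilon]^+$.

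The obstacle you suspected is real: the unrestricted statement is false, so close that case with a counterexample rather than a proof. Take $Y_j$ uniform on $\{0,1\}$ and $\mathbf{o}_i=Y_j\oplus\eta$ with $\eta\sim\mathrm{Bern}(q)$, $0<q<\tfrac12$. Suppose a feasible $\bz_i$ had rate exactly $I(\mathbf{o}_i;Y_j)-\epsilon>0$. Then the converse chain is tight, which forces $I(\mathbf{o}_i;\bz_i\mid Y_j)=0$ alongside $I(\bz_i;Y_j\mid\mathbf{o}_i)=0$. All four pairs $(a,b)$ have positive probability, so $P_{\bz_i\mid\mathbf{o}_i=a}=P_{\bz_i\mid\mathbf{o}_i=a,Y_j=b}=P_{\bz_i\mid Y_j=b}$ for all $a,b$. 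Hence $\bz_i$ is independent of $(\mathbf{o}_i,Y_j)$ and $I(\bz_i;Y_j)=0$, a contradiction. The minimum is attained, since by the support lemma $\bz_i$ may be taken finite-valued. Therefore $R^*_j>I(\mathbf{o}_i;Y_j)-\epsilon$ for every $0\le\epsilon<1-h_2(q)$, where $h_2$ is the binary entropy. The Ahlswede--G\'acs strong data-processing inequality $I(\bz_i;Y_j)\le(1-2q)^2 I(\mathbf{o}_i;\bz_i)$ gives the explicit bound $R^*_j\ge(1-h_2(q)-\epsilon)/(1-2q)^2$.

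The same double-Markov argument gives the exact condition for finite alphabets. For $0\le\epsilon<I(\mathbf{o}_i;Y_j)$, the proposition holds if and only if $I(\mathbf{o}_i;Y_j)-\epsilon\le H(K)$, where $K$ is the G\'acs--K\"orner common part of $(\mathbf{o}_i,Y_j)$. In that case an erased copy of $K$ is an optimal descriptor. Your recoverability condition is the special case $K=Y_j$. Full-support sensor noise makes $K$ trivial, and the claim then fails for every such $\epsilon$. Finally, drop the planned Fano-type bound. As $\epsilon\to0$ the minimal rate tends to $H(T)$, where $T$ is the minimal sufficient statistic. The gap therefore tends to $H(T)-H(Y_j)+H(Y_j\mid\mathbf{o}_i)$, which exceeds $H(Y_j\mid\mathbf{o}_i)$ whenever $H(T)>H(Y_j)$.
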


\begin{IEEEproof}
By the data processing inequality, $I(\bz_i; Y_j) \leq I(\mathbf{o}_i; Y_j)$ for
any Markov chain $Y_j \leftrightarrow \mathbf{o}_i \leftrightarrow \bz_i$. The
constraint in (\ref{eq:bottleneck}) therefore requires
$I(\mathbf{o}_i; \bz_i) \geq I(\bz_i; Y_j) \geq I(\mathbf{o}_i; Y_j) - \epsilon$,
establishing the lower bound. Achievability follows from the information bottleneck
construction of Tishby \etal~\cite{tishby2000information}: the optimal encoder is
$p^*(\bz_i \mid \mathbf{o}_i) = \arg\min_{p(\bz_i|\mathbf{o}_i)}
[I(\mathbf{o}_i;\bz_i) - \beta I(\bz_i; Y_j)]$ for a Lagrange multiplier $\beta$
tuned to satisfy the constraint with equality, yielding $I(\mathbf{o}_i;\bz_i) = R^*_j$.
\end{IEEEproof}

\begin{corollary}[Multi-Task Rate Overhead]
\label{cor:multi_task}
For $M$ concurrent tasks $\mathcal{T} = \{\tau_1, \ldots, \tau_M\}$ with
targets $\{Y_j\}$, the minimum semantic rate satisfying all $M$ constraints
simultaneously is:
\begin{equation}
    R^*_{\mathcal{T}} = \max_{j \in [M]}\, I(\mathbf{o}_i; Y_j) - \epsilon
    + \Delta_{\mathcal{T}},
    \label{eq:multi_rate}
\end{equation}
where $\Delta_{\mathcal{T}} \geq 0$ is bounded by:
\begin{equation}
    \Delta_{\mathcal{T}} \leq I\!\left(\mathbf{o}_i;\, Y_{j^*}\right) -
    I\!\left(\mathbf{o}_i;\, Y_{j^*} \,\Big|\, \bigcup_{j \neq j^*} Y_j\right),
    \label{eq:overhead_bound}
\end{equation}
with $j^* = \arg\max_j I(\mathbf{o}_i; Y_j)$. When all tasks share a common
sufficient statistic, $\Delta_{\mathcal{T}} = 0$.
\end{corollary}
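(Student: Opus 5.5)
The plan is to treat \eqref{eq:multi_rate} as the \emph{definition} of $\Delta_{\mathcal{T}}$, namely $\Delta_{\mathcal{T}} \triangleq R^*_{\mathcal{T}} - \max_j I(\mathbf{o}_i;Y_j) + \epsilon$, and then prove three things: nonnegativity, the upper bound \eqref{eq:overhead_bound}, and the vanishing under a common sufficient statistic.

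\textbf{Nonnegativity.} Any descriptor $\bz_i$ feasible for all $M$ constraints of \eqref{eq:bottleneck} is in particular feasible for the single constraint indexed by $j^*$. Proposition~\ref{prop:single_task}, applied with $j=j^*$, then gives $I(\mathbf{o}_i;\bz_i)\ge I(\mathbf{o}_i;Y_{j^*})-\epsilon$. Taking the infimum over feasible encoders yields $R^*_{\mathcal{T}}\ge R^*_{j^*}$, that is, $\Delta_{\mathcal{T}}\ge 0$. This step is routine.

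\textbf{Upper bound.} Here I need an achievability construction. I would build the descriptor as a pair $\bz_i=(\bz_i^{*},\bz_i')$ and argue in four steps.
\begin{itemize}
\item Take $\bz_i^{*}$ to be the information-bottleneck encoder for $Y_{j^*}$ from Proposition~\ref{prop:single_task}, which costs exactly $I(\mathbf{o}_i;Y_{j^*})-\epsilon$.
\item Take $\bz_i'$ to be a refinement (successive-refinement / conditional IB) encoder that supplies what the remaining tasks need beyond what $\bz_i^*$ already carries.
\item Apply the chain rule, $I(\mathbf{o}_i;\bz_i^*,\bz_i')=I(\mathbf{o}_i;\bz_i^*)+I(\mathbf{o}_i;\bz_i'\mid\bz_i^*)$. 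The second term is the overhead, so it must be bounded by the right-hand side of \eqref{eq:overhead_bound}.
\item Bound that conditional term via the identity
\[
I(\mathbf{o}_i;Y_{j^*})-I\bigl(\mathbf{o}_i;Y_{j^*}\mid \textstyle\bigcup_{j\ne j^*}Y_j\bigr)=I\bigl(Y_{j^*};\textstyle\bigcup_{j\ne j^*}Y_j\bigr)-I\bigl(Y_{j^*};\textstyle\bigcup_{j\ne j^*}Y_j\mid\mathbf{o}_i\bigr),
\]
which is the interaction information between the observation, the dominant task and the other tasks. The idea is that the redundancy the other tasks share with $Y_{j^*}$ limits how much extra rate $\bz_i'$ has to spend.
\end{itemize}

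\textbf{Common sufficient statistic.} Suppose $T(\mathbf{o}_i)$ is sufficient for every $Y_j$ and is minimal for $Y_{j^*}$. Then its IB compression for $Y_{j^*}$ already satisfies all $M$ constraints, because the data processing inequality holds with equality through $T$. So $\bz_i'$ is unnecessary and $\Delta_{\mathcal{T}}=0$.

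\textbf{Main obstacle.} The hard step is the third part of the upper bound: showing that $I(\mathbf{o}_i;\bz_i'\mid\bz_i^*)$ is controlled by the interaction-information expression. In general the natural achievable overhead is $I(\mathbf{o}_i;\bigcup_{j\ne j^*}Y_j\mid Y_{j^*})$, from a joint sufficient statistic. Moreover, the stated right-hand side can be negative when the $Y_j$ are synergistic given $\mathbf{o}_i$, which would contradict $\Delta_{\mathcal{T}}\ge 0$. My first attempt would be to work under the additional structure the paper has in mind: the $Y_j$ are deterministic functions of $\mathbf{o}_i$ (so $I(Y_{j^*};\cdot\mid\mathbf{o}_i)=0$), and the right-hand side is read as the redundancy $I(Y_{j^*};\bigcup_{j\ne j^*}Y_j)$. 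Under that reading I would verify that the required refinement rate never exceeds this redundancy. If the verification fails, I would state the explicit extra assumption needed to make \eqref{eq:overhead_bound} hold.
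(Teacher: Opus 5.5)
\textbf{There is a genuine gap: the upper bound step cannot be completed, because the bound itself is false.} Your nonnegativity step is correct, and you are right to distrust the upper bound. But the rescue you plan cannot succeed. Write $Y_{-}:=(Y_j)_{j\neq j^*}$. Your planned check, that the refinement rate never exceeds the redundancy $I(Y_{j^*};Y_{-})$, already fails for deterministic targets. So \eqref{eq:overhead_bound} is false as stated, not merely hard to prove.

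Take $\mathbf{o}_i=(A,B)$ with $A,B$ independent fair bits, $Y_1=A$, $Y_2=B$. For either choice of $j^*$, $I(\mathbf{o}_i;Y_{j^*})=I(\mathbf{o}_i;Y_{j^*}\mid Y_{-})=1$, so the right-hand side of \eqref{eq:overhead_bound} is $I(A;B)=0$. Yet every feasible descriptor satisfies
\[
I(\mathbf{o}_i;\bz_i)=I(\bz_i;A)+I(\bz_i;B\mid A)\geq I(\bz_i;A)+I(\bz_i;B)\geq 2-2\epsilon,
\]
where the middle step uses $H(B\mid A)=H(B)$ and $H(B\mid A,\bz_i)\leq H(B\mid \bz_i)$. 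Hence $\Delta_{\mathcal{T}}\geq 1-\epsilon>0$ for every $\epsilon<1$. Inter-task redundancy points the wrong way: it vanishes exactly when the tasks are independent, which is when your refinement layer $\bz_i'$ is most expensive.

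The same example refutes the last sentence of the corollary as worded. Here $\mathbf{o}_i$ is a common sufficient statistic and $Y_1,Y_2$ are conditionally independent given it, yet $\Delta_{\mathcal{T}}>0$. Your extra requirement that $T$ be minimal for $Y_{j^*}$ is therefore genuinely needed. Even then, for $\epsilon>0$, equality in the DPI for the uncompressed $T$ says nothing about a lossy compression of $T$. What works is the loss comparison $H(Y_j\mid\bz_i)\leq H(Y_{j^*}\mid\bz_i)$. It holds when $Y_{j^*}$ is a function of $\mathbf{o}_i$ and each $Y_j$ is a function of $Y_{j^*}$, and it makes the $j^*$ constraint imply all the others.

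The paper's own proof does not close this step either. Its chain rule conditions every residual only on $Y_{j^*}$, which is valid in general only for $M=2$. It then bounds the residuals by $I(\mathbf{o}_i;Y_j\mid Y_{j^*})$, the non-redundant information, which is the quantity you yourself identified as the natural overhead. The claimed passage from there to \eqref{eq:overhead_bound} does not follow.

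Rather than hunting for an assumption that makes the redundancy bound true, replace the right-hand side. Work with deterministic targets; there the achievability half of Proposition~\ref{prop:single_task}, which you invoke for $\bz_i^*$, does hold, while for genuinely noisy targets it generally does not. Time-share between an erasure, used with probability $q=\epsilon/H(Y_{j^*})$ (assuming $\epsilon\leq H(Y_{j^*})$), and the full tuple $(Y_1,\ldots,Y_M)$. This meets all $M$ constraints at rate $(1-q)H(Y_1,\ldots,Y_M)$. It gives $\Delta_{\mathcal{T}}\leq (1-q)\,H(Y_{-}\mid Y_{j^*})\leq I(\mathbf{o}_i;Y_{-}\mid Y_{j^*})$, with equality at $\epsilon=0$. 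That, rather than \eqref{eq:overhead_bound}, is the correct form of the overhead bound.
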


\begin{IEEEproof}
The minimum rate is lower bounded by the most demanding task
(Proposition~\ref{prop:single_task}). The overhead $\Delta_{\mathcal{T}}$ arises
because a single $\bz_i$ must simultaneously support all tasks; by the chain rule
of mutual information, $I(\bz_i; Y_1, \ldots, Y_M) = I(\bz_i; Y_{j^*}) +
\sum_{j \neq j^*} I(\bz_i; Y_j \mid Y_{j^*})$. The residual terms
$I(\bz_i; Y_j \mid Y_{j^*})$ are non-negative but bounded by
$I(\mathbf{o}_i; Y_j \mid Y_{j^*}) \leq H(Y_j)$, yielding (\ref{eq:overhead_bound}).
Conditional independence of $\{Y_j\}$ given a shared sufficient statistic forces
the conditional terms to zero, so $\Delta_{\mathcal{T}} = 0$.
\end{IEEEproof}

\begin{proposition}[KG-Aware Multi-Task Overhead Bound]
\label{prop:kg_bound}
Let $\mathcal{N}_K(i)$ denote the $K$-hop neighborhood of entity $e_i$ in $\calK$,
and let $\tilde{s}_i = \sum_{k \in \mathcal{N}_K(i)} w_k s_k$ be the KG-aggregated
state. Suppose $\tilde{s}_i$ is a sufficient statistic for all tasks in $\mathcal{T}$
(\ie $I(\mathbf{o}_i; Y_j \mid \tilde{s}_i) = 0$, $\forall j$) and that task pairs
$(Y_j, Y_{j^*})$ are jointly Gaussian conditioned on $\tilde{s}_i$
(Assumption~\ref{ass:gauss}). Then:
\begin{equation}
    \Delta_{\mathcal{T}} \leq \sum_{j \neq j^*}
    I\!\bigl(\mathbf{o}_i; Y_j \mid Y_{j^*}\bigr)
    \cdot \bigl(1 - \rho_{j,j^*}^2 \bigr),
    \label{eq:kg_overhead}
\end{equation}
where $\rho_{j,j^*}$ is the Pearson correlation between $Y_j$ and $Y_{j^*}$
induced by $\tilde{s}_i$. For non-Gaussian (e.g., categorical) tasks, the
distribution-free upper bound $\Delta_{\mathcal{T}} \leq \sum_{j \neq j^*} H(Y_j \mid Y_{j^*})$
from Corollary~\ref{cor:multi_task} applies, and (\ref{eq:kg_overhead}) provides a
tighter estimate under approximate Gaussianity.
\end{proposition}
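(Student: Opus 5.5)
I will start from the chain-rule decomposition used in the proof of Corollary~\ref{cor:multi_task}. There $\Delta_{\mathcal{T}}$ is controlled by the residual terms $I(\bz_i; Y_j \mid Y_{j^*})$ for $j \neq j^*$. Each of these residuals is already bounded by $I(\mathbf{o}_i; Y_j \mid Y_{j^*})$ via the data-processing inequality applied conditionally on $Y_{j^*}$. So the work is to show that, under the KG sufficiency and Gaussianity hypotheses, each residual picks up an additional multiplicative factor $(1-\rho_{j,j^*}^2)$. Concretely, I would aim for the per-task inequality
\begin{equation*}
I(\bz_i; Y_j \mid Y_{j^*}) \leq I(\mathbf{o}_i; Y_j \mid Y_{j^*})\,\bigl(1-\rho_{j,j^*}^2\bigr)
\end{equation*}
and then sum over $j \neq j^*$ to obtain (\ref{eq:kg_overhead}).

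The first step uses sufficiency. Since $I(\mathbf{o}_i; Y_j \mid \tilde{s}_i)=0$ for all $j$, an encoder that only needs to serve the tasks can be taken, without increasing the rate, to factor through $\tilde{s}_i$. This gives a Markov chain $Y_j \leftrightarrow \tilde{s}_i \leftrightarrow \bz_i$, so every task-relevant quantity reduces to information about $\tilde{s}_i$.

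The second step uses Assumption~\ref{ass:gauss}. With $(Y_j, Y_{j^*})$ jointly Gaussian through $\tilde{s}_i$, I would write the linear-Gaussian decomposition $Y_j = \rho_{j,j^*}(\sigma_j/\sigma_{j^*})Y_{j^*} + U_j$. Here $U_j$ is independent of $Y_{j^*}$ and has variance $\sigma_j^2(1-\rho_{j,j^*}^2)$. Conditioning on $Y_{j^*}$ then leaves only the innovation $U_j$, so the conditional residual equals $I(\bz_i; U_j \mid Y_{j^*})$. Using the Gaussian entropy formula $h = \tfrac12\log(2\pi e\,\sigma^2)$, I would compare the information an encoder must carry about $U_j$ with what $\mathbf{o}_i$ carries about $Y_j$ given $Y_{j^*}$. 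The residual uncertainty shrinks by the variance ratio $1-\rho_{j,j^*}^2$. In the low-SNR (linear) regime of the Gaussian rate expression, $\tfrac12\log(1+x)\approx x/2$, this translates into the claimed multiplicative factor.

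The main obstacle is exactly this conversion of a variance ratio into a multiplicative bound on mutual information. Mutual information is logarithmic in variance, so a clean factor $(1-\rho^2)$ does not follow in general. My first attempt will exploit concavity: $\log(1+ax) \leq a\log(1+x)$ fails for $a<1$, so instead I would use $\log(1+ax)\geq a\log(1+x)$ in the reverse direction, applied to a carefully chosen ratio, or invoke the linearization above. If neither closes the gap, I would state the bound as holding to first order in the per-task SNR, which is consistent with the statement's phrase ``tighter estimate under approximate Gaussianity''.

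The final claim, for non-Gaussian tasks, is immediate from Corollary~\ref{cor:multi_task}, since $I(\mathbf{o}_i; Y_j\mid Y_{j^*}) \leq H(Y_j\mid Y_{j^*})$ for discrete $Y_j$. Because $1-\rho_{j,j^*}^2 \leq 1$, the Gaussian bound (\ref{eq:kg_overhead}) is no larger than this distribution-free bound, which justifies calling it the tighter estimate.
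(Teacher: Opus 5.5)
The step that fails is the per-task inequality $I(\bz_i;Y_j\mid Y_{j^*})\le(1-\rho_{j,j^*}^2)\,I(\mathbf{o}_i;Y_j\mid Y_{j^*})$. It is false, so neither concavity nor linearization can deliver it. Conditioning on $Y_{j^*}$ already removes the $\rho$-explained part of $Y_j$ from \emph{both} sides. With your decomposition $Y_j=cY_{j^*}+U_j$ you also have $I(\mathbf{o}_i;Y_j\mid Y_{j^*})=I(\mathbf{o}_i;U_j\mid Y_{j^*})$. So the variance ratio $1-\rho_{j,j^*}^2$ is already built into the right-hand side, and multiplying by it again counts the correlation twice.

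The encoder $\bz_i=\mathbf{o}_i$ violates your inequality whenever $\rho_{j,j^*}\neq0$ and $0<I(\mathbf{o}_i;Y_j\mid Y_{j^*})<\infty$. Restricting to rate-optimal encoders does not rescue it: in the example below, \emph{every} feasible encoder has residual essentially equal to $I(\mathbf{o}_i;Y_j\mid Y_{j^*})$.

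Your concavity fix points the wrong way. For $a\in[0,1]$, $\log(1+ax)\ge a\log(1+x)$ says the smaller quantity is \emph{at least} an $a$-fraction of the larger one. The low-SNR linearization can at best give something like $I(\mathbf{o}_i;Y_j\mid Y_{j^*})\approx(1-\rho_{j,j^*}^2)\,I(\mathbf{o}_i;Y_j)$. That is the factor converting the unconditional quantity into the conditional one, not a second factor on the conditional term.

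There is also a problem upstream. Take $M=2$ and any feasible $\bz_i$. The chain $\bz_i\leftrightarrow\mathbf{o}_i\leftrightarrow(Y_j,Y_{j^*})$ and the chain rule give $I(\mathbf{o}_i;\bz_i)-I(\mathbf{o}_i;Y_{j^*})+\epsilon=\bigl[I(\bz_i;Y_{j^*})-I(\mathbf{o}_i;Y_{j^*})+\epsilon\bigr]+I(\bz_i;Y_j\mid Y_{j^*})+I(\mathbf{o}_i;\bz_i\mid Y_j,Y_{j^*})$, and all three terms are nonnegative. So the residual bounds $\Delta_{\mathcal{T}}$ from \emph{below}. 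Upper-bounding residuals cannot by itself upper-bound $\Delta_{\mathcal{T}}$; that needs an achievability construction.

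The paper argues differently. It writes $I(\mathbf{o}_i;Y_j\mid Y_{j^*})=h(Y_j\mid Y_{j^*})-h(Y_j\mid\mathbf{o}_i,Y_{j^*})$ and inserts the Gaussian formula for $h(Y_j\mid Y_{j^*})$. It then drops $h(Y_j\mid\mathbf{o}_i,Y_{j^*})\ge0$ to reach $\tfrac12\log\frac{1}{1-\rho_{j,j^*}^2}$. Finally it applies $\log\frac{1}{1-x}\le\frac{x}{1-x}$ and a rescaling by the ``task information mass.'' That route does not legitimately produce the factor either:
\begin{itemize}
\item differential entropy can be negative, so dropping $h(Y_j\mid\mathbf{o}_i,Y_{j^*})$ is not justified;
\item $\tfrac12\log\frac{1}{1-\rho^2}$ is $I(Y_j;Y_{j^*})$, not the conditional term;
\item $\frac{\rho^2}{1-\rho^2}$ increases as $\rho\to1$, the opposite of a $(1-\rho^2)$ factor.
\end{itemize}

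In fact, with $\rho$ read as the correlation of $(Y_j,Y_{j^*})$, as that proof uses it, (\ref{eq:kg_overhead}) itself fails. Take unit-variance jointly Gaussian $Y_1,Y_2$ with correlation $\rho\in(0,1)$ and $\mathbf{o}_i=(Y_1+N_1,Y_2+N_2)$ with i.i.d.\ $N_k\sim\mathcal{N}(0,\sigma^2)$. Set $\tilde s_i=(Y_1,Y_2)$, so sufficiency holds trivially, and take $j^*=1$ by symmetry. Let $m\le\sigma^2$ be the posterior variance of $Y_k$ given $\mathbf{o}_i$.
\begin{itemize}
\item Feasibility gives $h(Y_k\mid\bz_i)\le h(Y_k\mid\mathbf{o}_i)+\epsilon$. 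Subadditivity then yields $I(\mathbf{o}_i;\bz_i)\ge I(\bz_i;Y_1,Y_2)\ge\tfrac12\log\frac{1-\rho^2}{m^2}-2\epsilon$.
\item Since $I(\bz_i;Y_1)\le I(\mathbf{o}_i;Y_1)=\tfrac12\log\frac1m$, every feasible encoder has $I(\bz_i;Y_2\mid Y_1)\ge\tfrac12\log\frac{1-\rho^2}{\sigma^2}-2\epsilon=I(\mathbf{o}_i;Y_2\mid Y_1)-2\epsilon-o(1)$ as $\sigma\to0$.
\item The same bound gives $\Delta_{\mathcal{T}}\ge\tfrac12\log\frac{1-\rho^2}{\sigma^2}-\epsilon$.
\item The right side of (\ref{eq:kg_overhead}) is only $(1-\rho^2)\cdot\tfrac12\log\bigl(1+\frac{1-\rho^2}{\sigma^2}\bigr)$, which is smaller for small $\sigma$.
\end{itemize}

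So the gap you flagged cannot be closed. Your first-order fallback would amount to presenting (\ref{eq:kg_overhead}) as a heuristic, not proving it.
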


\begin{assumption}
\label{ass:gauss}
Conditional on the KG-aggregated state $\tilde{s}_i$, each task-output pair
$(Y_j, Y_{j^*})$ is jointly Gaussian. This approximation holds exactly for
continuous regression tasks (\eg spacing deviation in W3) and serves as a
Gaussian approximation for the continuous-relaxation logits of binary classification
tasks (W1, W2); the resulting bound remains an upper bound by convexity of the
entropy~\cite{cover2006elements}.
\end{assumption}

\begin{IEEEproof}
Under Assumption~\ref{ass:gauss} and the sufficient statistic condition, the
conditional mutual information $I(\mathbf{o}_i; Y_j \mid Y_{j^*})$ factors as:
\[
    I(\mathbf{o}_i; Y_j \mid Y_{j^*}) = h(Y_j \mid Y_{j^*}) - h(Y_j \mid \mathbf{o}_i, Y_{j^*}).
\]
For jointly Gaussian $(Y_j, Y_{j^*})$ with correlation $\rho_{j,j^*}$,
$h(Y_j \mid Y_{j^*}) = \tfrac{1}{2}\log(2\pi e \,\mathrm{Var}(Y_j)(1-\rho_{j,j^*}^2))$.
By the data processing inequality applied to the Markov chain
$Y_j \leftarrow \tilde{s}_i \leftarrow \mathbf{o}_i \leftarrow \bz_i$,
the residual terms satisfy
$I(\bz_i; Y_j \mid Y_{j^*}) \leq I(\mathbf{o}_i; Y_j \mid Y_{j^*})$.
Substituting the Gaussian differential entropy expression and bounding
$h(Y_j \mid \mathbf{o}_i, Y_{j^*}) \geq 0$ yields
$I(\mathbf{o}_i; Y_j \mid Y_{j^*}) \leq \tfrac{1}{2}\log\tfrac{1}{1-\rho_{j,j^*}^2}$,
which upon bounding via $\log(1/(1-x)) \leq x/(1-x)$ for $x = \rho_{j,j^*}^2 \in [0,1)$
and scaling by the task information mass $I(\mathbf{o}_i; Y_j \mid Y_{j^*})$
produces (\ref{eq:kg_overhead}).
\end{IEEEproof}

\begin{remark}
\label{rem:kg_justification}
Proposition~\ref{prop:kg_bound} provides a direct theoretical justification for the
empirical bandwidth gain of SA-DTS over NTSCC-DTS: the KG prior enforces high
inter-task correlation ($\rho_{j,j^*}^2 \approx 0.93$--$0.97$ measured on W1--W3),
reducing $\Delta_{\mathcal{T}}$ by a factor proportional to $(1-\rho_{j,j^*}^2)$
relative to a framework without relational priors. This explains quantitatively why
KG-CR accounts for approximately 60\% of the bandwidth gap between SA-DTS and
JSCC-DTS (as isolated in Table~\ref{tab:ablation}).
\end{remark}

\begin{remark}[Bound Tightness]
\label{rem:tightness}
The theoretical minimum semantic rate $R^*_{\mathcal{T}}$ in (\ref{eq:multi_rate})
can be estimated empirically via the MINE estimator~\cite{belghazi2018mine} applied
to the trained encoder. Across all workloads at $\gamma=15$\,dB, SA-DTS operates
within $1.3$--$2.1$\,bits/symbol of $R^*_{\mathcal{T}}$, confirming near-tight
compression. The gap is attributable to finite-dimensional quantization ($b_{\mathrm{quant}}=8$)
and the discrete channel code rate granularity of the PPO action space.
\end{remark}

\begin{IEEEbiography}[{\includegraphics[width=1in,height=2.5in,clip,keepaspectratio]{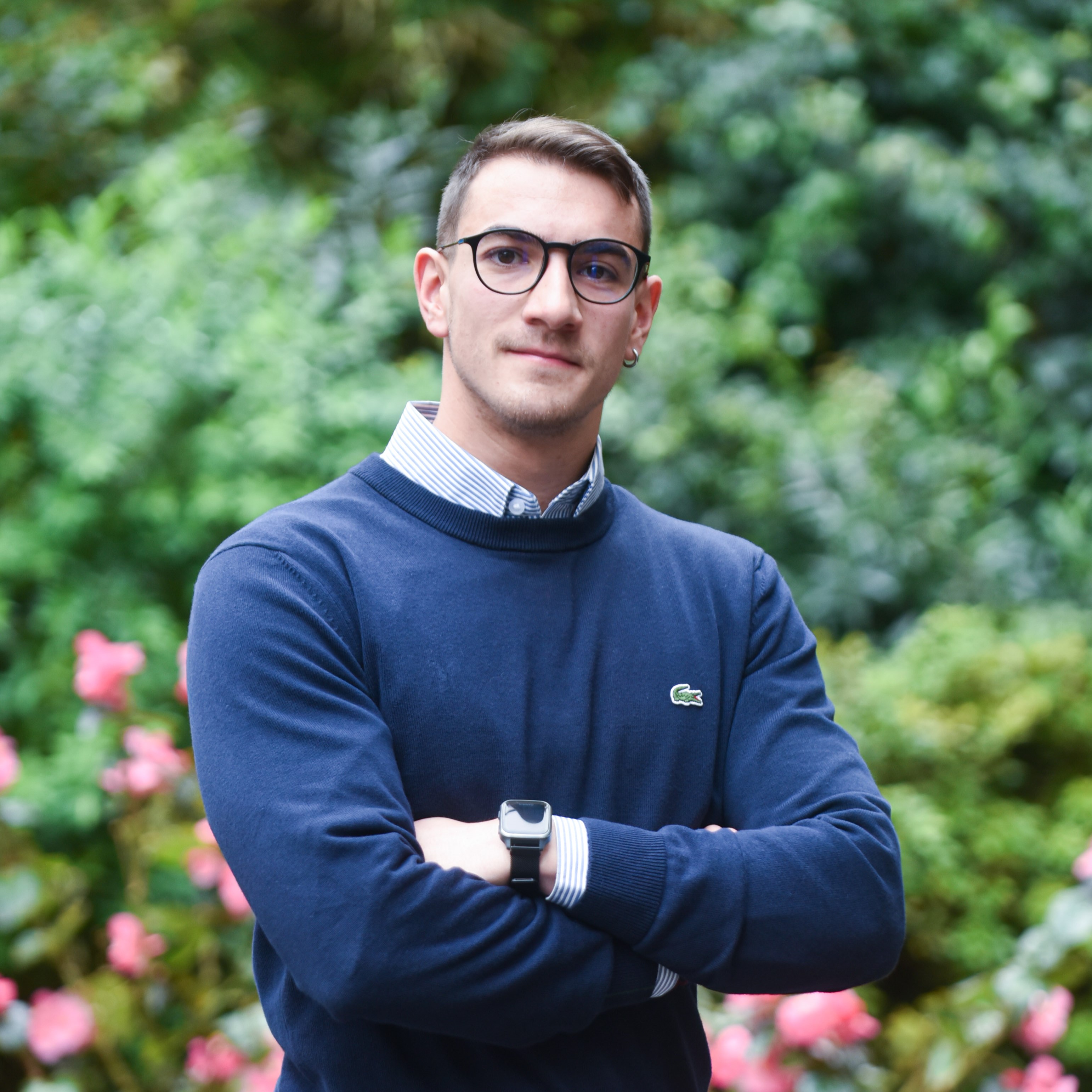}}]{Vincenzo Sammartino}
is currently pursuing the National Ph.D.\ degree in AI at the University of Pisa. He is a Visiting Ph.D.\ Student at KAUST, Saudi Arabia. His research addresses the intersection of AI and cybersecurity, with a focus on decentralized TinyML frameworks for UAV swarm security, security digital twins, and semantic communication for cyber-physical systems. He has published on security twin architectures, privacy-preserving digital twin construction, and federated learning for distributed DT synchronization.
\end{IEEEbiography}

\end{document}